\def\Tr{\operatorname{Tr}}
\newtheorem{theorem}{Theorem}
\begin{document}

\title{Encoding parameters by measurement: Forgetting can be better in quantum metrology}

\author{Shuva Mondal}
\email{mondalshuva01@gmail.com}

\author{Priya Ghosh}
\email{priyaghosh1155@gmail.com}

\author{Ujjwal Sen}
\email{ujjwal@hri.res.in}

\affiliation{Harish-Chandra Research Institute, A CI of Homi Bhabha National Institute, Chhatnag Road, Jhunsi, Prayagraj 211 019, India}

\begin{abstract}
We introduce quantum parameter estimation with the encoding being via a quantum measurement. We quantify the precision for estimating parameters characterizing a general two-outcome qubit measurement, considering two cases: when the outcomes of the \textit{encoding} measurement are recorded and when the same are ignored. We find that in a large variety of  
such estimation scenarios, forgetting the outcomes yields higher precision. We derive a necessary criterion under which remembering the measurement outcomes provides better precision in comparison to the outcome-forgotten strategy. Furthermore, we establish a necessary and sufficient criterion for the simultaneous estimation of multiple parameters encoded by an arbitrary quantum process, including those involving measurements, using qubit probes, and find when the quantum Cram\'er–Rao bound 
%(QCRB) 
is valid and achievable. For simultaneous estimation of two parameters characterizing the measurement, we find that the achievable quantum Cram\'er-Rao bound can be a valid precision bound only when the measurement direction depends on the parameters of interest.
\end{abstract}

\maketitle

\section{Introduction}
Several quantum information protocols rely on performing specific measurements to unlock quantum advantages. For example, perfect quantum teleportation~\cite{Bennett1993,Pirandola2015} using a singlet requires a Bell measurement on Alice’s qubits, while the maximal violation of Bell inequalities~\cite{bell_aspect_2004,bell-review} in the (2,2,2) Bell scenario using a singlet state is achieved only under a precise choice of local measurements on both sides. These tasks therefore require the noiseless specific measurements. In realistic quantum devices, however, noise is unavoidable and can introduce small, unknown deviations in the implemented measurements. This motivates the study of quantum estimation of parameters encoded by a measurement,
which is fundamentally different from known encoding schemes such as unitary or channel parameter estimation.

Quantum metrology~\cite{Helstrom1969,HOLEVO1973337,PhysRevLett.72.3439,doi:10.1126/science.1104149,PhysRevLett.94.020502,PhysRevLett.96.010401,Giovannetti2011,holevo,RevModPhys.89.035002,RevModPhys.90.035005,RevModPhys.90.035006,Pirandola2018,mondal2024multicriticalquantumsensorsdriven,agarwal2025criticalquantummetrologyusing,agarwal2025quantumsensingultracoldsimulators} is the area of study in which one uses quantum mechanical principles to estimate a single parameter or a set of parameters, as precisely as possible. Quantum parameter estimation has been widely explored when the parameters are encoded by unitary processes~\cite{PhysRevA.110.012620,Paris2024,bhattacharyya2025quantumsensingevenversus,bhattacharyya2025precisionestimatingindependentlocal,saha2025entanglementconstrainedquantummetrologyrapid,pal2025rolephaseoptimalprobe,chaki2025nonpositivemeasurementsarentbeneficial}, quantum channels~\cite{PhysRevA.63.042304, Akio_Fujiwara_2003, MasahitoHayashi_2010,demkowicz_2012,rafal_2014, Pirandola17, RAZAVIAN2019825, PRXQuantum.2.010343,rafal_2023,mondal2025optimalquantumprecisionnoise}, unitary encoding under noise~\cite{Huelga97,Alipur14,Yousefjani17,Peng23,Bhattacharyya24a}, and related scenarios~\cite{Mehboudi2019-thermometry,Agrawal2025indefinitetime,sarkar2025}.
In this work, we study the estimation of parameters characterizing arbitrary two-outcome qubit measurements. We consider two distinct strategies: (i) in the first, the outcome of the encoding measurement is recorded; the post-measurement state corresponding to each outcome is treated as the encoded state, and the estimation error is computed for each measurement outcome, weighted by its occurrence probability, leading to a probabilistic average estimation error, referred to as the “outcome-remembered error (OR error)”; (ii) in the second, the measurement outcome is not accessible; the encoded state is the non-selective post-measurement state obtained when one forgets the measurement outcomes, and we analyze the estimation error for this particular encoded state, referred to as the “outcome-forgotten error (OF error).” We compare these two estimation errors and address the key question: does remembering measurement outcomes enhance metrological precision in measurement-parameter estimation?

We focus on the estimation of parameters encoded by a general two-outcome qubit measurement.
We find strikingly that the OF error yields strictly better precision than the OR error, a behavior that persists even after optimizing over all probes in a large variety of such estimation scenarios.
More specifically, we establish a necessary criterion under which remembering the measurement outcomes can provide better precision than the outcome-forgotten strategy in estimating any parameter of the encoding measurement.

We then turn to the simultaneous estimation of multiple parameters and derive a necessary and sufficient criterion in the estimation of multiple parameters encoded by an arbitrary encoding process using qubit probes, and find 
when the quantum Fisher information matrix (QFIM), associated with the achievable  quantum Cram\'er–Rao bound (QCRB), is invertible. It has been shown that when any two unitarily-encoded parameters are estimated simultaneously with qubit probes, imposing the achievability of the QCRB constraint always renders the QFIM singular~\cite{Paris2024}, implying that the QCRB ceases to be a valid precision bound in such scenarios. 
In particular, in our work, we derive a necessary and sufficient criterion for the invertibility of QFIM associated with an achievable QCRB
for arbitrary two-parameter estimation scenarios, where the parameters can be encoded through any encoding process. However, this situation changes fundamentally in multiple parameter estimation settings: for any multiparameter encoding scheme involving more than two simultaneously estimated parameters, the QFIM associated with an achievable QCRB is necessarily singular.
%Then, considering our more specific case, where encoding is done via measurement,} 
Furthermore, we find that in parameter-encoding schemes implemented through measurement, if assessment of measurement direction is not a part of simultaneous estimation of two parameters, the QFIM associated with the achievable QCRB is always singular in all strategies. However, when estimating direction is included in the multiparameter metrology, QFIM singularity is not necessary, irrespective of whether the measurement outcomes are remembered or forgotten.

The rest of the paper is organized as follows. In Sec.~\ref{pre}, we briefly discuss a few well-known concepts related to this work: quantum metrology and quantum measurement. 
In Sec.~\ref{subsection:problem_statement}, we quantify the estimation errors for the parameters encoded by the measurement in two scenarios: (i) when the outcome of the encoding measurement is recorded, and (ii) when the outcome is ignored. 
We present our result obtained in the estimation of a single parameter characterizing an arbitrary two-outcome qubit measurement in Sec.~\ref{single parameter}, and find a necessary criterion under which remembering individual measurement outcomes can provide better precision in the single measurement-parameter estimation scenario over the outcome-forgotten strategy.
In Sec.~\ref{multiple-parameters-general}, we show that for any encoding, QFIM, associated to achievable QCRB, is always singular if number of parameters to be estimated simultaneously is greater than two, followed by a necessary and sufficient condition when the number of parameters is exactly two, while in Sec.~\ref{multiple-parameters-measurement},
%we derive a necessary and sufficient criterion for simultaneous two-parameter estimation encoded by an arbitrary quantum encoding process using qubit probes and find when the QFIM is invertible whenever the achievability condition of QCRB is imposed. In the latter part of this section, 
we analyze the scenario of multiparameter estimation of parameters encoded by measurement and
further show that, unlike the unitary encoding two-parameter estimation using qubit probes, the QFIM associated with the achievable QCRB does not necessarily become singular when two measurement parameters are estimated simultaneously.
Finally, we conclude in Sec.~\ref{conclusion}.

\section{Setting the stage}
\label{pre}
Here, we discuss two basic concepts relevant to this work: quantum metrology and positive operator-valued measurements (POVMs).

The goal of quantum metrology~\cite{Helstrom1969,HOLEVO1973337,PhysRevLett.72.3439,Liu_2020} is to estimate unknown parameters as precisely as possible using quantum resources, based on the fundamental principles of quantum mechanics. 
In quantum multiparameter estimation theory, the estimation error, quantified by the covariance matrix of an estimator and denoted by $\mathbf{V}(\bm{x})$, is bounded by the inverse of the quantum Fisher information matrix (QFIM), known as the quantum Cram\'er–Rao bound (QCRB), i.e.,
\begin{align}
\label{eq:multiparameter-quantum_cramer_rao}
    \mathbf{V}(\bm{x})\geq \mathbb{\mathbf{F}}_\mathbf{Q}^{-1}\left(
    \Lambda_{\bm{x}}(\rho_{\text{in}})\right).
\end{align}
$\mathbb{\mathbf{F}}_\mathbf{Q}\left(\Lambda_{\bm{x}}(\rho_{\text{in}})\right)$ 
denotes the QFIM of the encoded state $\rho(\bm{x})\coloneqq \Lambda_{\bm{x}}(\rho_{\text{in}})$, where $\Lambda_{\bm{x}}(\bullet)$ is the quantum process that encodes $m\in\mathbb{N}$ parameters $\bm{x}=\{x_1,\dots,x_m\}$, the parameters intended to be estimated, into the input state $\rho_{\text{in}}$.
The elements of QFIM are defined as
\begin{align}
\label{eq:QFIM with SLD}
[\mathbb{\mathbf{F}}_\mathbf{Q}(\rho(\bm{x}))]_{ij}=\frac{1}{2}\Tr\left[\rho(\bm{x})\{L_{x_i}L_{x_j}+L_{x_i}L_{x_j}\}\right],
\end{align}
where symmetric logarithmic derivative (SLD) operator $L_{x_i}$ corresponding to the parameter $x_i$
satisfies $\frac{\partial \rho(\bm{x})}{\partial x_i}=\frac{1}{2}\left[L_{x_i} \rho(\bm{x}) + \rho(\bm{x}) L_{x_i}\right].$ For single-parameter estimation, the QCRB is always achievable and valid.
In the multiparameter setting, however, it faces two issues: (i) it may be unachievable due to incompatibility of the optimal measurements associated with different parameters, diagnosed via the Uhlmann matrix~\cite{Carollo2018}
\begin{align}
\label{eq:Uhlmann_matrix}
[\mathcal{D}(\rho(\bm{x}))]_{ij} = \frac{1}{2} \Tr \left[\rho(\bm{x})\{L_{x_i}L_{x_j} - L_{x_j}L_{x_i}\}\right],
\end{align}
which must vanish for QCRB achievability, and (ii) the QFIM can be singular.
In such cases, alternative bounds like the Holevo–Cramér–Rao bound~\cite{holevo} (HCRB), which is attainable, nonsingular but tighter
compared to the QCRB, are often used.
Further details are provided in Appendix~\ref{section:quantum_metrology}.

Now we move into the discussion of the quantum measurements. Let us denote a general positive operator-valued measurement (POVM) by $\{E_i\}$. 
If the POVM $\{E_i\}$ is performed on a quantum state $\rho$, the probability that the outcome associated with $E_i$ clicks and the corresponding post-measurement state are given by $Q_i=\Tr[E_i\rho]$ and, $\frac{\sqrt{E_i}\rho\sqrt{E_i}}{Q_i}$ respectively. On the other hand, if the measurement outcomes are forgotten or not recorded, the resulting (non-selective) post-measurement state becomes $\sum_i \sqrt{E_i}\, \rho \, \sqrt{E_i}$.

In the remainder of the paper, we denote a $d$-dimensional Hilbert space by $\mathcal{H}_d$, and the identity operator acting on that space by $\mathbb{I}_d$. $ \vec{\sigma} \coloneqq \{ \sigma_x, \sigma_y, \sigma_z\}$ refer to the vector of the standard Pauli operators. In the single-parameter setting, we denote the parameter of interest by $x_1$, whereas in the multiparameter case, the set of $m$ parameters is written as $\{x_1, x_2, \dots, x_m\}$.

\section{Our pursuit}
\label{subsection:problem_statement}
In this work, we study the estimation of a parameter or a set of parameters that characterize a quantum measurement.
In this section, we study two estimation strategies in such estimation scenarios—one where encoding measurement outcomes are recorded and another where the outcomes are unknown—and quantify the best achievable estimation errors of the parameter of interest in each case.

Suppose an $k$-outcome POVM $\{E_i(x_1)\}$, parameterized by a parameter $x_1$, is performed on an input state $\rho_{\text{in}}\in \mathcal{H}_d$,
where $x_1$ is the measurement parameter of interest that we wish to estimate.
Let $\rho_{f_i}(x_1)$ denote the post-measurement state conditioned on the click of the $i^{\text{th}}$ outcome, obtained with probability
$Q_i(x_1)$.
When the measurement outcomes are recorded, the achievable precision in estimating the parameter $x_1$ can be quantified via the following quantity
\begin{equation}
\label{eq-OR}
\Delta_{\textbf{OR}}  (x_1) \coloneqq \sum_{i=1}^k \frac{Q_i(x_1)}{\sqrt{\mathbb{F}_Q(\rho_{f_i}(x_1))}},
\end{equation}
where $\mathbb{F}_Q(\rho_{f_i}(x_1))$ denotes the quantum Fisher information (QFI) of the encoded state $\rho_{f_i}(x_1)$.
We refer to $\Delta_{\textbf{OR}} (x_1) $ as the outcome-remembered error (OR error, for short).
On the other hand, if the record of measurement outcomes is not kept, the minimum achievable error to estimate $x_1$ is  given by
\begin{equation}
\label{eq-OF}
\Delta_{\textbf{OF}} (x_1) \coloneqq \frac{1}{\sqrt{\mathbb{F}_Q(\rho_F(x_1))}},
\end{equation}
where $\rho_F(x_1)$ is the encoded state—defined as the non-selective post-measurement state obtained by forgetting the outcomes after the measurement $\{E_i(x_1)\}$ on
the input state $\rho_{\text{in}}$.
We name $\Delta_{\textbf{OF}} (x_1) $ as outcome-forgotten error (OF error, for short).

Extending the single measurement parameter estimation case to multiple measurement parameters $\bm{x} \coloneqq \{x_j\}$, the OR error and the OF error will have the following form:
\begin{align*}
\Delta_{\textbf{OR}}  (\bm{x} ) &\coloneqq \sum_{i=1}^k \mathbf{Q}_i(\bm{x}) \sqrt{\Tr[\mathcal{W} \mathbb{\mathbf{F}}_\mathbf{Q}^{-1}(\bm{\rho}_{f_i}(\bm{x})})],\\
\Delta_{\textbf{OF}}  (\bm{x}) &\coloneqq \sqrt{\Tr[\mathcal{W} \mathbb{\mathbf{F}}_\mathbf{Q}^{-1}(\bm{\rho_F}(\bm{x}))]},
\end{align*}
when an $k$-outcome POVM $\{\bm{E}_i(\bm{x})\}$, parameterized by parameters $\bm{x}$, is performed on an input state $\rho_{\text{in}}$. 
Here, $\bm{\rho}_{f_i}(\bm{x})$ is the post-measurement state conditioned on the $i^{\text{th}}$ outcome, obtained with probability $\mathbf{Q}_i(\bm{x})$, while $\bm{\rho_F}(\bm{x})$ is the non-selective post-measurement state when outcomes are not recorded.
$\mathbb{\mathbf{F}}_\mathbf{Q}(\bm{\rho}_{f_i}(\bm{x}))$, and $\mathbb{\mathbf{F}}_\mathbf{Q}(\bm{\rho_F}(\bm{x}))$ denote the QFIM of the encoded state $\bm{\rho}_{f_i}(\bm{x})$ and $\bm{\rho_F}(\bm{x})$ respectively.
The matrix $\mathcal{W}$ is a real, symmetric, and positive-definite cost (weight) matrix introduced to convert the QCRB into a scalar figure of merit for multiparameter estimation.
If the QFIM of the encoded states, $\mathbb{\mathbf{F}}_\mathbf{Q}(\bm{\rho}_{f_i}(\{x_j\}))$ and $\mathbb{\mathbf{F}}_\mathbf{Q}(\bm{\rho_F}(\{x_j\}))$, face any of the issues discussed earlier in Sec.~\ref{pre} (e.g., singularity of QFIM or unattainability of QCRB), alternative attainable bounds can be used to define $\Delta_{\textbf{OR}}$ and $\Delta_{\textbf{OF}}$ in multiple measurement-parameter estimation.
We denote the $\Delta_{\textbf{OR}}$ and  $\Delta_{\textbf{OF}}$, each minimized over all input states, by $\bar{\Delta}_{\textbf{OR}}$ and $\bar{\Delta}_{\textbf{OF}}$ respectively, throughout our work. Also an important comment regarding our notation: we will denote the encoded state as $\rho_F$ when measurement outcomes are forgotten and $\rho_{f_i}$ to denote the encoded state when post-selected states corresponds to clicking of $E_i$.

\begin{figure*}
    \centering \hspace{-12 mm}
    \begin{minipage}[t]{0.47\textwidth}
        \centering
        \includegraphics[trim=0cm 0.0cm 0.0cm 0, clip, height=5.9 cm]{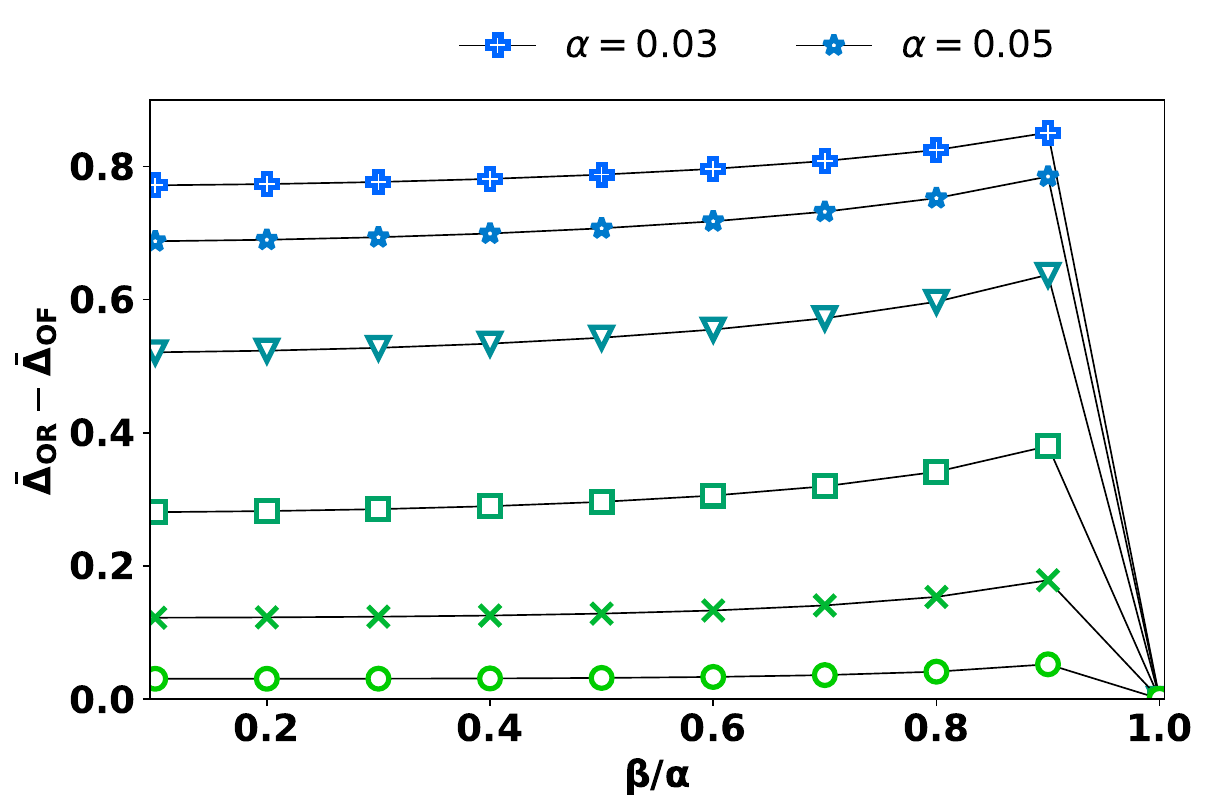}\\
        \textbf{(a)}
    \end{minipage}\hspace{2mm}
    \begin{minipage}[t]{0.47\textwidth}
        \centering
        \includegraphics[trim=0cm 0.0cm 0.0cm 0, clip,  height=5.9 cm]{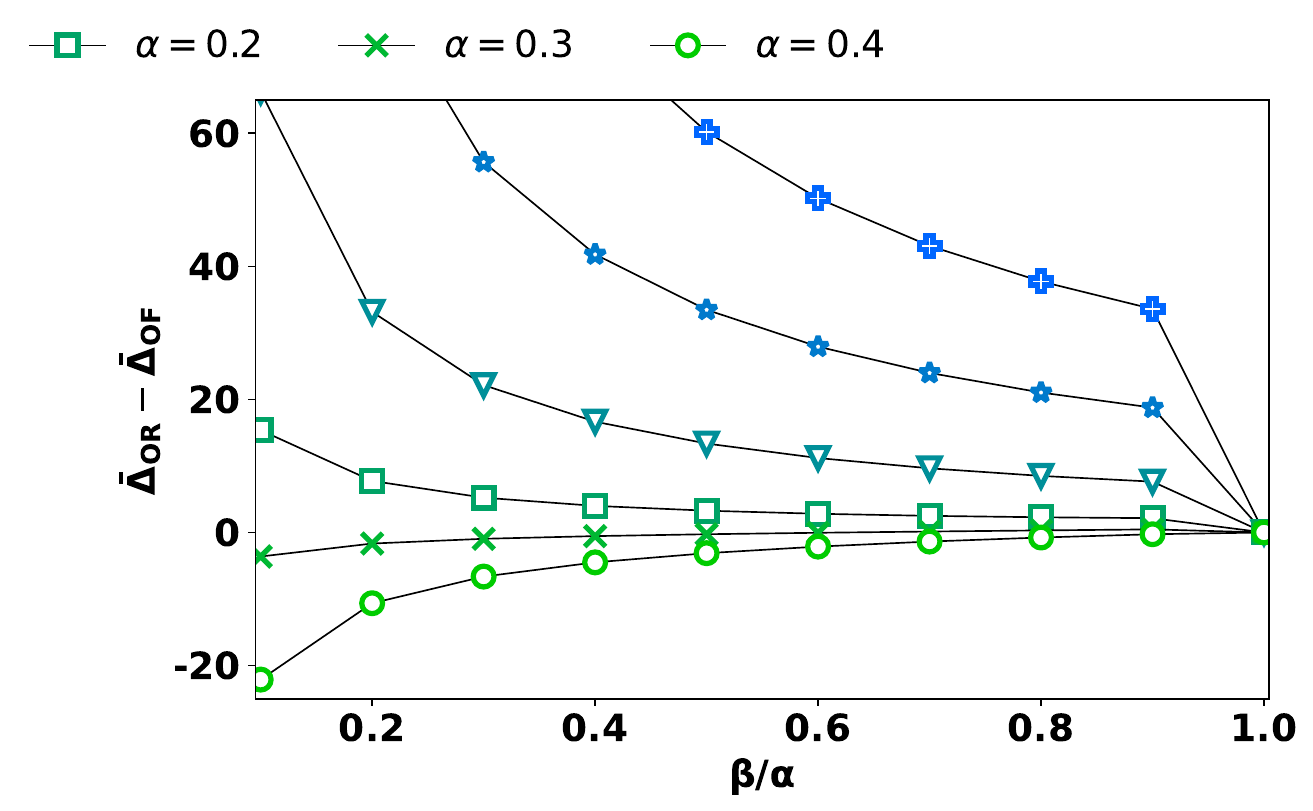}\\
        \textbf{(b)}
    \end{minipage}
    \caption{(a) The left plot shows $(\bar{\Delta}_{\textbf{OR}}-\bar{\Delta}_{\textbf{OF}})$ along the vertical axis while the ratio $\alpha/\beta$ is varied along the horizontal axis for estimation of the parameter of $\beta$. The $(\bar{\Delta}_{\textbf{OR}}-\bar{\Delta}_{\textbf{OF}})$ is always nonnegative in this case. The non-existence of any negative value of $(\bar{\Delta}_{\textbf{OR}}-\bar{\Delta}_{\textbf{OF}})$ implies that forgetting measurement outcomes always gives better precision than remembering whatever the choice of the values of parameters $\alpha$ and $\beta$ are. Hence, when the parameter to be estimated is in the coefficient of the traceless matrix of two-outcome qubit POVM elements, remembering measurement outcomes is not good for precision at all. (b) The right plot shows $(\bar{\Delta}_{\textbf{OR}}-\bar{\Delta}_{\textbf{OF}})$ along the vertical axis, during estimation of $\alpha$, as the ratio $\beta/\alpha$ is changed along the horizontal axis, for various values of the parameter $\alpha$. There are many values of $\alpha$ and $\beta$ such that $(\bar{\Delta}_{\textbf{OR}}-\bar{\Delta}_{\textbf{OF}})$ have negative values, implying that keeping encoding measurement outcomes can provide better precision. This is altered if the value of $\alpha$ is lowered. Therefore, remembering encoding measurement outcomes can provide better precision in such scenarios only if the parameter of interest is in the coefficient of the nonzero trace part of POVM elements and it is a necessary criterion, not a sufficient one. Both axes, as well as \(\alpha\), of both panels, represent dimensionless quantities.
  }
    \label{fig:1} 
\end{figure*}

\section{
Quantum metrology of a  parameter encoded by a measurement}
\label{single parameter}
In this section, we restrict ourselves to quantum estimation of a parameter encoded by any arbitrary two-outcome qubit measurement using qubit probes. 
We address the question:
\emph{does recording (remembering) the encoding measurement outcomes enhance the precision of estimating the measurement parameter in a large variety of such scenarios compared to ignoring them?}
Hence, we ask whether the estimation errors satisfy, $\Delta_{\textbf{OR}} \leq \Delta_{\textbf{OF}} ,$
i.e., is the OR error universally smaller than or equal to the OF error?

The most general two-outcome qubit POVM, $\{E_i\}$ with $i \in \{1,2\}$, can be written as 
\begin{align}
\label{eqn-measurement}
    E_1&=\alpha \mathbb{I}_2+\beta \sigma_{\hat{n}}, \quad
    E_2=(1-\alpha)\mathbb{I}_2-\beta \sigma_{\hat{n}},
\end{align}
where $\alpha,\beta$ are real constants with $\beta \leq \alpha$ and $\hat{n}$ is a unit vector defining the measurement direction through $\sigma_{\hat{n}} \coloneqq \hat{n} \cdot \vec{\sigma}$. We refer to $\alpha$ and $\beta$ as the ``measurement coefficients".
Any one of the variables $\alpha$, or $\beta$, or $\hat{n}$ can serve as the parameter of interest, which we denote by $x_1$. 

For single measurement–parameter estimation of the POVM in Eq.~\eqref{eqn-measurement}, three cases are possible: (i) estimation of either $\alpha$, $\beta$, or a parameter on which both $\alpha$ and $\beta$ depend, (ii) estimation of measurement direction $\hat{n}$, and
(iii) estimation of a parameter that jointly governs the measurement coefficient and/or direction, i.e., a parameter on which $\alpha$ and $\hat{n}$ depend, or $\beta$ and $\hat{n}$ depend, or all three $\alpha$, $\beta$, and $\hat{n}$ depend.
In this measurement-parameter estimation task, we compare the OR error with the OF error and arrive at the following statement:

\begin{theorem}
\label{theorem:0}
    When the eigenstates of the encoded qubit state do not depend on the parameter to be estimated, then keeping encoding measurement records can yield better precision in estimation of the measurement parameter only if the coefficient of the nonzero-trace operator of a two-outcome qubit POVM elements carries the parameter dependence. 
\end{theorem}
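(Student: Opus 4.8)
The plan is to work entirely in the Bloch representation. Writing the probe as $\rho_{\text{in}}=\tfrac12(\mathbb{I}_2+\vec{s}\cdot\vec{\sigma})$ and using $\sqrt{E_1}=a\,\mathbb{I}_2+b\,\sigma_{\hat n}$ with $a^2+b^2=\alpha$ and $2ab=\beta$ (and the analogous $\sqrt{E_2}$), I would first compute the Bloch vectors of the three encoded states. Two structural facts drive everything. First, since $E_1+E_2=\mathbb{I}_2$, the non-selective state is the convex mixture $\rho_F=Q_1\rho_{f_1}+Q_2\rho_{f_2}$ with $Q_1+Q_2=1$, so its Bloch vector is $\vec{r}_F=Q_1\vec{r}_1+Q_2\vec{r}_2$. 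Second, a short computation shows that $\rho\mapsto\rho_F$ is a dephasing along $\hat n$: the component of $\vec{s}$ along $\hat n$ is preserved while the transverse part is rescaled by $\mu\coloneqq\sqrt{\alpha^2-\beta^2}+\sqrt{(1-\alpha)^2-\beta^2}$, i.e.\ $\vec{r}_F=(\hat n\cdot\vec{s})\,\hat n+\mu\,\vec{s}_\perp$. In particular $\rho_F$ depends on the coefficients only through $\mu$.

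Next I would translate the hypothesis. For a qubit the eigenbasis of $\rho_F$ is parameter-independent iff the direction of $\vec{r}_F$ is $x_1$-independent; from $\vec{r}_F=(\hat n\cdot\vec{s})\hat n+\mu\,\vec{s}_\perp$ this forces (in the non-degenerate case where $\rho_F$ actually carries $x_1$) a transverse probe $\hat n\cdot\vec{s}=0$, and it forces the measurement direction $\hat n$ itself to be $x_1$-independent, since otherwise the whole vector reorients. Thus the hypothesis confines the parameter to the coefficients $\alpha,\beta$ and pins the probe to $\vec{s}\perp\hat n$, for which the outcome probabilities collapse to $Q_1=\alpha$, $Q_2=1-\alpha$. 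Under a transverse probe the three encoded states lie on a fixed great circle, so each quantum Fisher information reduces to the scalar form $\mathbb{F}_Q=(r')^2/(1-r^2)$ (and to $|\partial_{x_1}\hat r|^2$ for the pure conditional states obtained from a pure probe, which is optimal), giving the closed forms $\sigma_1=\sqrt{\alpha^2-\beta^2}$, $\sigma_2=\sqrt{(1-\alpha)^2-\beta^2}$ and $\Delta_{\textbf{OF}}=\sqrt{1-\mu^2}/|\partial_{x_1}\mu|$.

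The crux is then a dichotomy governed entirely by whether the weights $Q_1=\alpha$, $Q_2=1-\alpha$ depend on $x_1$, that is, by whether the coefficient $\alpha$ of the only nonzero-trace operator $\mathbb{I}_2$ carries the parameter. If $\alpha$ is $x_1$-independent (the parameter sits in $\beta$), the mixture $\rho_F=Q_1\rho_{f_1}+Q_2\rho_{f_2}$ has fixed weights, and the comparison reduces to the scalar inequality $Q_1\sigma_1+Q_2\sigma_2\ge 1/\sqrt{\mathbb{F}_Q(\rho_F)}$, i.e.\ $\alpha\sqrt{\alpha^2-\beta^2}+(1-\alpha)\sqrt{(1-\alpha)^2-\beta^2}\ \ge\ \Delta_{\textbf{OF}}$. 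I would establish this using the relation that links the two square roots through normalization: writing $\alpha=\sqrt{(\alpha^2-\beta^2)+\beta^2}$ and $1-\alpha=\sqrt{((1-\alpha)^2-\beta^2)+\beta^2}$, the constraint $\alpha+(1-\alpha)=1$ becomes a single relation between $\sqrt{\alpha^2-\beta^2}$, $\sqrt{(1-\alpha)^2-\beta^2}$ and $\beta$; the inequality then holds with equality exactly at the symmetric point $\alpha=\tfrac12$. Hence remembering can never beat forgetting when $\alpha$ is fixed. Conversely, when $\alpha$ does depend on $x_1$ the weights themselves become $x_1$-sensitive, the fixed-weight reduction (and the underlying convexity of the Fisher information) no longer applies, and the explicit expressions — for instance when $\alpha$ itself is the estimated parameter — show the inequality can reverse, so that $\Delta_{\textbf{OR}}<\Delta_{\textbf{OF}}$ becomes possible. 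Taking the contrapositive yields the statement.

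The step I expect to be the real obstacle is the scalar inequality in the fixed-weight case. It is \emph{not} a consequence of the generic convexity of the quantum Fisher information, since convexity only upper-bounds $\mathbb{F}_Q(\rho_F)$, whereas here I need a lower bound on it (equivalently an upper bound on $\Delta_{\textbf{OF}}$). It genuinely relies on the special geometry of the two pure conditional states together with the normalization constraint; a convenient reduction is to use the identity $1-\mu^2=2\big(\alpha(1-\alpha)+\beta^2-\sqrt{(\alpha^2-\beta^2)((1-\alpha)^2-\beta^2)}\big)$ to clear the square root in $\Delta_{\textbf{OF}}$ before comparing. A secondary point needing care is justifying rigorously that the fixed-eigenbasis hypothesis forces the transverse probe and that restricting to pure probes is without loss of optimality, rather than merely being one convenient family.
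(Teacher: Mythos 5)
Your proposal follows essentially the same route as the paper's proof: restrict to the measurement direction being parameter-independent, evaluate both errors on the transverse pure probe, reduce the $\beta$-only case to the scalar inequality $\alpha\sqrt{\alpha^2-\beta^2}+(1-\alpha)\sqrt{(1-\alpha)^2-\beta^2}\ge \sqrt{\alpha^2-\beta^2}\sqrt{(1-\alpha)^2-\beta^2}\sqrt{1-h^2}/(\beta h)$, and exhibit a reversal for $\alpha$-estimation. Your closed forms ($\sigma_i$, $\Delta_{\textbf{OF}}=\sqrt{1-\mu^2}/|\partial_{x_1}\mu|$, equality at $\alpha=\tfrac12$, the identity for $1-\mu^2$) all check out against Eqs.~\eqref{eqn-outcome-forgotten-specific}--\eqref{eq:avgerr_remember_outcome}. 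What you add is a cleaner structural framing — $\rho_F=Q_1\rho_{f_1}+Q_2\rho_{f_2}$ as a dephasing channel depending on the coefficients only through $\mu$, and the diagnosis that the dichotomy is driven by whether the weights $Q_i$ carry the parameter — which the paper does not articulate. Neither you nor the paper proves the central scalar inequality analytically: the paper asserts it "can be easily seen" for pure probes and is "checked numerically" over all probes, and you candidly flag it as the real obstacle; you are on the same footing there.

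The one point where your argument is genuinely weaker than the paper's is the elimination of non-transverse probes. You derive $\vec s\perp\hat n$ from the hypothesis that the eigenstates of the encoded state are parameter-independent, applied to $\rho_F$. But this reading does not survive contact with the OR strategy: the conditional states $\rho_{f_i}$ have Bloch vectors $\propto r g_i\hat r \pm \beta\hat n+\cdots$, whose \emph{directions} depend on $\beta=x_1$ even for a transverse probe, so the hypothesis cannot literally mean "all encoded states have fixed eigenbases," and the paper in fact uses it only to mean that $\hat n$ itself is $x_1$-independent (see the discussion around Fig.~\ref{fig:alpha_est}). Under that reading you must still show that no non-transverse probe makes $\Delta_{\textbf{OR}}$ dip below $\bar{\Delta}_{\textbf{OF}}$; the paper minimizes $\Delta_{\textbf{OR}}$ over all probes (analytically over pure ones, numerically in general) precisely to close this, and your plan as written would leave that case uncovered. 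Since you already have the exact $\phi_1$-dependent expression for $\Delta_{\textbf{OR}}$, showing it is minimized at $\phi_1=\pi/2$ would repair this without changing anything else in your argument.
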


\begin{proof}
We consider the case where the eigenstates of the encoded qubit state do not depend on the parameter to be estimated. We prove the theorem in two steps: first, we show that when the encoded parameter is the coefficient of the traceless operator of the POVM elements, keeping records of the measurement outcomes never improves the precision compared to ignoring the outcomes; second, we demonstrate by an explicit example that considering the coefficient of the nonzero-trace part of the POVM elements as the parameter of interest can allow remembering the encoding measurement outcomes to yield better precision than forgetting the outcomes.

Without any loss of generality, we set $\hat{n} \coloneqq (0,0,1)$. 
Any general qubit input state can be written as
\begin{align}
\label{eq:general_input_state}
    \rho_{\text{in}}=\frac{1}{2}\left(\mathbb{I}_2+r\sigma_{\hat{r}}\right), 
\end{align}
where $\sigma_{\hat{r}} \coloneqq \hat{r} \cdot \vec{\sigma}$ and $\vec{r} = r \hat{r}$ denotes the Bloch vector with norm $r \in [0,1]$ and unit vector $\hat{r} \coloneqq (\sin \phi_1 \cos \phi_2, \sin \phi_1 \sin \phi_2, \cos \phi_1)$ with $\phi_1 \in [0, \pi)$ and $\phi_2 \in [0, 2\pi)$.
\\

\textbf{Step I:} We estimate the coefficient of the traceless operator of POVM
elements in Eq.~\eqref{eqn-measurement}, $\beta$. For convenience, we set $\beta=x_1$.

Let us start with the scenario where the encoding measurement outcomes are recorded. 
If the input probe is pure, i.e., $r=1$, the OR error is given by
\begin{align*}
    \Delta_{\textbf{OR}}=&\frac{\sqrt{\alpha^2-x_1^2}(\alpha+x_1\cos\phi_1)^2}{\alpha\sin\phi_1}\notag \\&+\frac{\sqrt{(1-\alpha)^2-x_1^2}(1-\alpha-x_1\cos\phi_1)^2}{(1-\alpha)\sin\phi_1}.
\end{align*}

On the other hand, when the encoding measurement outcomes are forgotten, the encoded state is given by
\begin{align}
\label{eq:forgot_measurement_encoding}
    &\rho_F=\frac{1}{2}[(1+r\cos(\phi_1))\ketbra{0}{0}+(1-r\cos(\phi_1))\ketbra{1}{1} \nonumber \\
&+r\sin(\phi_1)h(x_1,\beta)\left(e^{i\phi_2}\ketbra{0}{1}+e^{-i\phi_2}\ketbra{1}{0}\right)],
\end{align}
where $\{\ket{0},\ket{1}\}$ denotes the standard computational basis and $h(\alpha,x_1) \coloneqq \left(\sqrt{\alpha^2-x_1^2}+\sqrt{(1-\alpha)^2-x_1^2}\right).$
The QFI, corresponding to the encoded state $\rho_{F}$ in this $\beta$ estimation task, is
\begin{equation*}
    \left(\frac{h_2r\sin\phi_1}{2}\right)^2\left(1+\frac{3+hr^2}{1-hr^2}\sin^2\phi_1\right),
\end{equation*}
where $h_2\coloneqq \partial h(\alpha,x_1)/\partial x_1$.
\\
Note that QFI is maximum over all input states when $r=1$ and $\sin\phi_1=1$, and therefore the maximum QFI becomes $h_2^2/(1-h^2)$.
The minimum OF error is obtained when the QFI is maximum, i.e., when $r=1$ and $\sin\phi_1=1$.
Consequently, the corresponding minimum OF error over all input states is
\begin{align}
\label{eqn-outcome-forgotten-specific}
    \bar{\Delta}_{\textbf{OF}}=\frac{\sqrt{\alpha^2-x_1^2}\sqrt{(1-\alpha)^2-x_1^2}\sqrt{1-h(\alpha,x_1)^2}}{x_1 h(\alpha,x_1)}.
\end{align}
If we minimize $\Delta_{\textbf{OR}}$ over all pure input states, it can be easily seen that $\Delta_{\textbf{OF}}\leq\Delta_{\textbf{OR}}$ always occurs, which implies that forgetting encoding measurement outcomes always yields better precision compared to recorded ones in this $\beta$ estimation task.
Furthermore, even if we minimize $\Delta_{\textbf{OR}}$ in this $\beta$ estimation task over all probes, we have obtained the same conclusion (which is checked numerically).
In Fig.~\ref{fig:1}(a), we plot $\bar{\Delta}_{\textbf{OF}} - \bar{\Delta}_{\textbf{OR}}$ corresponding to this $\beta$ estimation task for several values of $\alpha$ as a function of $\beta/\alpha$, and see  that the inequality $\Delta_{\textbf{OF}}\leq\Delta_{\textbf{OR}}$ is always satisfied.

Using the property of QFI given in Eq,~\eqref{property:QFI-1}, we conclude that the relation  $\bar{\Delta}_{\textbf{OF}}\leq\bar{\Delta}_{\textbf{OR}}$ continue to hold, even when $\beta$ is any function of $x_1$, i.e., $\beta=\beta(x_1)$, rather than $\beta= x_1$ and the behavior persists even after optimizing over all probes.
\\

\textbf{Step II:} 
We will provide here an example of input state by which we show that when the coefficient of the nonzero-trace part of the POVM elements acts as the parameter of interest, remembering the encoding measurement outcomes can yield better precision than forgetting one. So, we estimate the parameter $\alpha$ in Eq.~\eqref{eqn-measurement} and for convenience, we set $\alpha=x_1$.

In this $\alpha$ estimation task, when the encoding measurement outcomes are ignored, the QFI corresponding to the encoded state $\rho_F$ is calculated to be 
$$\left(\frac{h_1r\sin\phi_1}{2}\right)^2\left(1+\frac{3+hr^2}{1-hr^2}\sin^2\phi_1\right),$$ 
which is independent of $\phi_2$. 
Here $h(x_1,\beta) \coloneqq \left(\sqrt{x_1^2-\beta^2}+\sqrt{(1-x_1)^2-\beta^2}\right)$, and $h_1 \coloneqq \partial h(x_1,\beta)/\partial x_1$ 
This QFI is maximum over all parameters of input states, $r \in [0,1]$ and $\phi_1 \in [0, \pi/2]$, at $r=1$ and $\phi_1=\pi$. 
Therefore, when one estimates $\alpha$ and forgets the encoding measurement outcomes, the maximum QFI over all probes is $h_1^2/(1-h^2)$.
The corresponding minimum $\Delta_{\textbf{OF}}$ is 
\begin{equation}
\label{eq:beta_estimation_qfi_forgotten_measurement}
   \bar{\Delta}_{\textbf{OF}} =  \frac{\sqrt{1-\left(\sqrt{x_1^2-\beta^2}+\sqrt{(1-x_1)^2-\beta^2}\right)^2}}{\left[\frac{x_1}{\sqrt{x_1^2-\beta^2}}-\frac{1-x_1}{\sqrt{(1-x_1)^2-\beta^2}}\right]}.
\end{equation}

In the remembering encoding measurement outcomes scenario, let us choose the input state as $\ket{+} \coloneqq \frac{1}{\sqrt{2}}(\ket{0} +\ket{1})$. The OR error for the input state $\ket{+}$ is given by
\begin{equation}
\label{eq:avgerr_remember_outcome}
 \Delta_{\textbf{OR}} =   \frac{x_1^2\sqrt{x_1^2-\beta^2}+(1-x_1)^2\sqrt{(1-x_1)^2-\beta^2}}{\beta}.
\end{equation}
Note that the OR error in Eq.~\eqref{eq:avgerr_remember_outcome} is less than the $\bar{\Delta}_{\textbf{OF}}$ in Eq.~\eqref{eq:beta_estimation_qfi_forgotten_measurement} when $x_1=0.4$ and $\beta=0.1$.
Since $\bar{\Delta}_{\textbf{OF}}$ in Eq.~\eqref{eq:beta_estimation_qfi_forgotten_measurement} is the OF error obtained by minimizing over all input states, the OR error corresponding to $\ket{+}$ state will always be less than the $\bar{\Delta}_{\textbf{OF}}$ in Eq.~\eqref{eq:beta_estimation_qfi_forgotten_measurement} when $x_1=0.4$ and $\beta=0.1$.
Therefore, at $x_1=0.4$ and $\beta=0.1$, if we optimize the OR error over all input states, the OR error will show either better and better or same precision compared to the optimum OF error in this $\alpha$ estimation task. Thus, when the coefficient of the nonzero-trace part of the POVM elements acts as the parameter of interest, remembering the encoding measurement outcomes can yield better precision than forgetting one. The property of QFI mentioned in Eq.~\eqref{property:QFI-1} implies that even if $\alpha$ is a function of $x_1$, the relation  $\Delta_{\textbf{OR}} \leq \bar{\Delta}_{\textbf{OF}}$ still holds when $x_1=0.4$ and $\beta=0.1$ and the input state is $\ket{+}$ while calculating the OR error.
However, in general, it may not be true for all $\alpha$ and $\beta$ in this $\alpha$ estimation task. 
See Fig.~\ref{fig:1}(b), we plot $\bar{\Delta}_{\textbf{OF}} - \bar{\Delta}_{\textbf{OR}}$ for several values of $\alpha$ as a function of $\beta/\alpha$, and notice that there exist a few $\alpha$ for which $\bar{\Delta}_{\textbf{OR}} \leq \bar{\Delta}_{\textbf{OF}}$ is not true as well, which implies that our theorem is a necessary criterion, not a sufficient one.

Thus, the proof is completed. 
\end{proof}

Furthermore, we examine whether keeping the encoding measurement records can provide better precision than ignoring the outcomes in the case where the coefficient of the zero trace operator of the POVM elements in Eq.~\eqref{eqn-measurement} carries the parameter of interest, while the eigenstates of the encoded qubit depend on the parameter to be estimated, unlike the statement in theorem~\ref{theorem:0}. To investigate this, we should consider scenarios in which $\beta$ and $\theta$ are arbitrary functions of $x_1$, where $x_1$ denotes the parameter of interest. We perform numerical calculations of $\bar{\Delta}_{\mathbf{OF}}$ and $\bar{\Delta}_{\mathbf{OR}}$,  taking $\theta=x_1$ and $\beta=\alpha\cos(x_1)$ for different values of $\alpha$ that are independent of $x_1$. We plot $\bar{\Delta}_{\textbf{OF}}-\bar{\Delta}_{\textbf{OR}}$ as a function of $\theta$ in Fig.~\ref{fig:alpha_est} and we find that
$\bar{\Delta}_{\textbf{OF}} \leq \bar{\Delta}_{\textbf{OR}}$
in all cases. This implies that Theorem~\ref{theorem:0} also holds here: keeping the records of the encoding measurement outcomes never improves precision when the coefficient of zero trace operator of POVM elements carries dependence of the parameter of interest, even though the eigenstates of the encoded qubit state depend on the parameter to be estimated.

At this point, it is interesting to remember a fact about a form of entanglement called bound entanglement~\cite{bound-entanglement-prl-1998}. It has been argued~\cite{shor_smolin_2001PRL} that mixing two quantum states, each with zero distillable entanglement but nonzero entanglement and having negative partial transpose, can produce a state with nonzero distillable entanglement, under a certain assumption. This effect is analogous to our result that forgetting the outcome of an encoding measurement can be useful in metrology.

\begin{figure}
%\hspace{-0.30cm}
\includegraphics[scale=0.38]{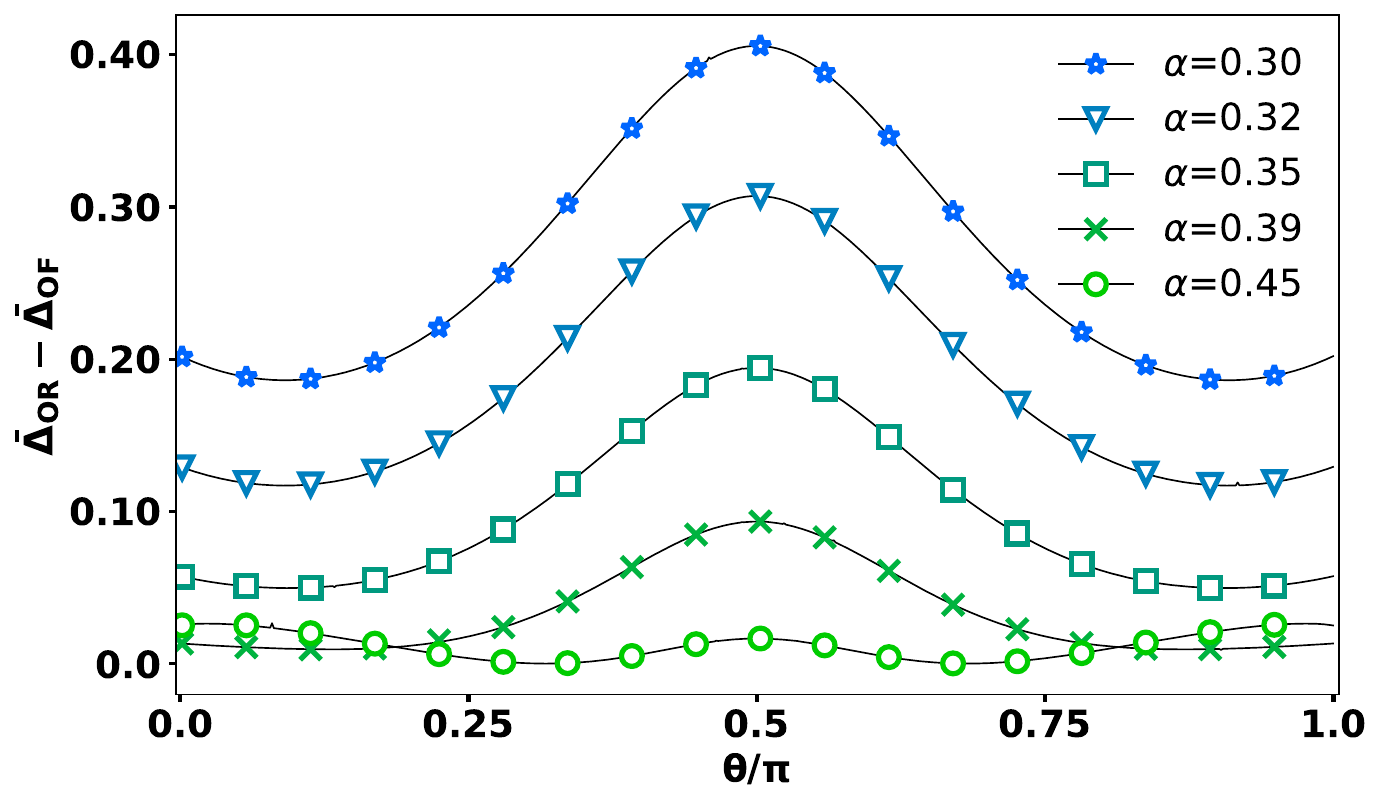}
    \caption{The quantity $(\bar{\Delta}_{\textbf{
    OR}}-\bar{\Delta}_{\textbf{
    OF}})$ is plotted along the vertical axis in case of estimating $\theta$, which is varied along the horizontal axis in the unit of $\pi$. Various values of $\alpha$ are considered with $\beta=\alpha\cos\theta$ is taken in this plot. 
    Note that $(\bar{\Delta}_{\textbf{
    OR}}-\bar{\Delta}_{\textbf{
    OF}})$ never has negative values when eigenstates of the encoded state depend on the parameter to be estimated and hence, it supports the generalization of our Theorem.~\ref{theorem:0}. Both axes, as well as \(\alpha\), of this plot, represent dimensionless quantities.}
\label{fig:alpha_est}
\end{figure}

\vspace{2mm}

\section{Criterion for the invertibility of the achievable quantum Fisher information matrix}
\label{multiple-parameters-general}
The multiparameter QCRB in any quantum multiparameter estimation task is useful when both the achievability of the QCRB and the nonsingularity of the QFIM hold. In this direction, Ref.~\cite{ghosh2025efficient} provides a necessary and sufficient criterion for the QFIM to be noninvertible in the estimation of any two parameters encoded by an arbitrary quantum process using qubit probes. However, since the multiparameter QCRB is not always attainable even when the QFIM is invertible, it is important to find the criterion that ensures both QFIM invertibility and multiparameter QCRB achievability.
Keeping this in mind, we identify below the criteria for achieving QCRB and non-singularity of QFIM associated with achievable QCRB, for estimation task of any multiparameter encoding involving qubit probe.
%, in terms of two theorem.}
%{Keeping this in mind, we state below a necessary and sufficient criterion for any two-parameter estimation task encoded by an arbitrary quantum process using qubit probes.}

\begin{theorem}
\label{corollary:achivability}
For any multiparameter estimation task encoded by an arbitrary quantum process using qubit probes, the QCRB corresponding to estimation of parameters in the set $\bm{x}\coloneqq\{x_i\}$ is achievable if and only if $\bra{\mathbf{e}_0}\frac{\partial\rho(\bm{x})}{\partial x_i}\ket{\mathbf{e}_1}$ is a real multiple of $\bra{\mathbf{e}_0}\frac{\partial\rho(\bm{x})}{\partial x_j}\ket{\mathbf{e}_1}$ for all $x_i$ and $x_j$, 
where $\rho(\bm{x})$ denotes the encoded state with the eigenbasis $\{\ket{\mathbf{e}_0}, \ket{\mathbf{e}_1}\}$.
\end{theorem}

\begin{proof}
 
Any general qubit encoded state, $\rho(\bm{x})$, can be written in the spectral decomposition form as $\rho(\bm{x})=\sum_{\mu = 0}^{1} \lambda_u\ketbra{e_\mu}{e_\mu}$, where $\{\ket{e_\mu}\}$ denotes the eigenbasis corresponding to the eigenvalue $\{\lambda_\mu\}$.
It was known that multiparameter
QCRB is achievable if and only if the expectation values of the
commutators of the SLDs corresponding to all pairs of parameters of interest vanish with respect to the encoded state, i.e., the Uhlmann matrix elements vanish~\cite{PhysRevA.89.023845,PhysRevA.94.052108}. 
The Uhlmann matrix elements for the encoded state $\rho(\bm{x})$ can be written as
\begin{widetext}
\begin{align}
    [\mathcal{D}]_{ij}=4\sum_{\substack{\mu, \nu \\ \lambda_{\mu}+\lambda_{\nu}\neq0}}\lambda_{\mu}\frac{\bra{e_\mu}\frac{\partial\rho(\bm{x})}{\partial x_i}\ket{e_\nu}\bra{e_\nu}\frac{\partial\rho(\bm{x})}{\partial x_j}\ket{e_\mu}-\bra{e_\mu}\frac{\partial\rho(\bm{x})}{\partial x_j}\ket{e_\nu}\bra{e_\nu}\frac{\partial\rho(\bm{x})}{\partial x_i}\ket{e_\mu}}{(\lambda_{\mu}+\lambda_{\nu})^2},
\end{align}
\end{widetext}
with $i,j \in \{1,2\}$.
After simplification, $[\mathcal{D}]_{ij}$ for the encoded state $\rho(\bm{x})$, looks like the following:
\begin{align*}
    &[\mathcal{D}]_{ij}\\&=2(\lambda_0-\lambda_1)\text{Im}\left(\bra{e_0}\frac{\partial\rho(\bm{x})}{\partial x_i}\ket{e_1}\bra{e_1}\frac{\partial\rho(\bm{x})}{\partial x_j}\ket{e_0}\right),
\end{align*}
since $\lambda_0 + \lambda_1 = 1$ and when $\mu=\nu$, terms in the matrix elements of Uhlmann matrix get zero.

Note that $\lambda_1=\lambda_0$ is a trivial solution of $[\mathcal{D}]_{ij} = 0$ for all $i,j$, because the encoded state is a maximally mixed state when $\lambda_1=\lambda_0$. The corresponding Uhlmann matrix elements are by definition zero and the achievability of QCRB is automatically satisfied. Excluding this trivial case, the achievability of QCRB is satisfied only when 
\begin{equation*}
\text{Im}\left(\bra{e_0}\frac{\partial\rho(\bm{x})}{\partial x_i}\ket{e_1}\bra{e_1}\frac{\partial\rho(\bm{x})}{\partial x_j}\ket{e_0}\right)=0.
\end{equation*}
It implies that the two-parameter QCRB is achievable for an arbitrary quantum encoding process using qubit probes if and only if 
\begin{equation}
\label{eq: qubit achievability condition}
    \bra{e_0}\frac{\partial\rho(\bm{x})}{\partial x_i}\ket{e_1}=c_{ij} \bra{e_0}\frac{\partial\rho(\bm{x})}{\partial x_j}\ket{e_1}.
\end{equation}
where $c_{ij}$ denotes a real number. 
%It is now straightforward to find exactly similar condition to Eq.~\eqref{eq: qubit achievability condition} for multiparameter estimation following this calculation.
%This proves the Corollary~\ref{corollary:achivability}. 
It completes the proof.
\end{proof}

%\textcolor{blue}{
%In the next theorem, we extend the two-parameter scenrio to multiple parameter case.
%what happens when the number of estimating parameter is two.}
In the following theorem, considering two-parameter estimation using qubit probes, we state necessary and sufficient condition regarding the invertibility of QFIM corresponding to achievable QCRB.

\begin{theorem}
\label{theorem1}
For any estimation task of two parameters $\{x_1,x_2\}$ encoded by an arbitrary quantum process using qubit probes, the quantum Fisher information matrix associated with an achievable quantum Cram\'er–Rao bound is invertible if and only if $\bra{e_0}\frac{\partial\rho(x_1,x_2)}{\partial x_1}\ket{e_1}$ is a real constant multiple of $\bra{e_0}\frac{\partial\rho(x_1,x_2)}{\partial x_2}\ket{e_1}$ and $\frac{\partial \rho(x_1,x_2)}{\partial x_1}$ does not commute with $\frac{\partial \rho(x_1,x_2)}{\partial x_2}$, where $\rho(x_1, x_2)$ denotes the encoded state with the eigenstates $\{\ket{e_0},\ket{e_1}\}$.
\end{theorem}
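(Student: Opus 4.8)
The plan is to work entirely in the (parameter-dependent) eigenbasis $\{\ket{e_0},\ket{e_1}\}$ of the encoded qubit state $\rho(x_1,x_2)=\lambda_0\ketbra{e_0}{e_0}+\lambda_1\ketbra{e_1}{e_1}$, and to reduce both the achievability criterion (vanishing of the Uhlmann matrix $\mathcal{D}$) and the invertibility of the QFIM to statements about two real three-vectors built from the derivatives of $\rho$. Concretely, I would encode each derivative by its matrix elements in this basis, $a_i \coloneqq \bra{e_0}\partial_{x_i}\rho\ket{e_1}$ (complex) and $d_i\coloneqq\bra{e_0}\partial_{x_i}\rho\ket{e_0}$ (real, with $\bra{e_1}\partial_{x_i}\rho\ket{e_1}=-d_i$ since $\Tr\partial_{x_i}\rho=0$), equivalently writing $\partial_{x_i}\rho=\vec{v}_i\cdot\vec{\sigma}$ in this basis with $\vec{v}_i\coloneqq(\mathrm{Re}\,a_i,-\mathrm{Im}\,a_i,d_i)$. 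Throughout I assume $\lambda_0\neq\lambda_1$, which is exactly what makes the eigenbasis---and hence the statement---well defined.

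First I would insert the qubit symmetric logarithmic derivatives, $[L_{x_i}]_{mn}=2\bra{e_m}\partial_{x_i}\rho\ket{e_n}/(\lambda_m+\lambda_n)$, into the Uhlmann matrix of Eq.~\eqref{eq:Uhlmann_matrix}. A short computation collapses the only independent off-diagonal entry to $\mathcal{D}_{12}=4i(\lambda_0-\lambda_1)\,\mathrm{Im}(a_1a_2^{*})$. Since $\lambda_0\neq\lambda_1$, the QCRB is achievable if and only if $\mathrm{Im}(a_1a_2^{*})=0$, i.e. $a_1$ is a real multiple of $a_2$; this is precisely the first hypothesis of the theorem.

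Next I would compute the $2\times2$ QFIM from the same SLDs, obtaining $[\mathbb{\mathbf{F}}_\mathbf{Q}]_{ij}=d_id_j/(\lambda_0\lambda_1)+4\,\mathrm{Re}(a_ia_j^{*})$, and evaluate its determinant. The key algebraic step---the one I expect to carry the weight of the argument---is recognising that the determinant factors as a positive-weighted sum of squares of the components of the cross product $\vec{v}_1\times\vec{v}_2$:
\begin{equation*}
\det\mathbb{\mathbf{F}}_\mathbf{Q}=\frac{4}{\lambda_0\lambda_1}\big[(\vec{v}_1\times\vec{v}_2)_x^2+(\vec{v}_1\times\vec{v}_2)_y^2\big]+16\,(\vec{v}_1\times\vec{v}_2)_z^2,
\end{equation*}
where one checks directly that $(\vec{v}_1\times\vec{v}_2)_z=\mathrm{Im}(a_1a_2^{*})$ is exactly the quantity controlling $\mathcal{D}$. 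Because both weights are strictly positive for a genuine qubit state, $\det\mathbb{\mathbf{F}}_\mathbf{Q}=0$ if and only if $\vec{v}_1\times\vec{v}_2=0$.

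Finally I would close the loop using the identity $[\partial_{x_1}\rho,\partial_{x_2}\rho]=2i\,(\vec{v}_1\times\vec{v}_2)\cdot\vec{\sigma}$, so that the second hypothesis---noncommuting derivatives---is equivalent to $\vec{v}_1\times\vec{v}_2\neq0$. Combining the three facts yields the theorem at once: ``QCRB achievable and QFIM invertible'' means ``$(\vec{v}_1\times\vec{v}_2)_z=0$ and $\vec{v}_1\times\vec{v}_2\neq0$,'' which is exactly ``$\mathrm{Im}(a_1a_2^{*})=0$ (hypothesis one) and $[\partial_{x_1}\rho,\partial_{x_2}\rho]\neq0$ (hypothesis two).'' The one place demanding care is the rank-deficient (pure-state) case, where the pure-state QFI formula must be used directly: there $d_1=d_2=0$ forces the $x,y$ components of the cross product to vanish, so achievability (the $z$-component being zero) already forces $\vec{v}_1\times\vec{v}_2=0$ and hence a singular QFIM---recovering, as a consistency check, the known impossibility for unitarily/pure-state-encoded pairs reported in Ref.~\cite{Paris2024}.
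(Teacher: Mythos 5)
Your proposal is correct and follows essentially the same route as the paper's proof: both work in the eigenbasis of $\rho(x_1,x_2)$, reduce QCRB achievability to $\mathrm{Im}\!\left(\bra{e_0}\partial_{x_1}\rho\ket{e_1}\bra{e_1}\partial_{x_2}\rho\ket{e_0}\right)=0$ via the Uhlmann matrix, and then tie invertibility of the QFIM to non-commutativity of the derivatives through $\det\mathbb{\mathbf{F}}_\mathbf{Q}$ computed from the SLDs. Your cross-product identity for the determinant is a marginally cleaner bookkeeping of the same computation—it delivers both directions of the biconditional simultaneously and uniformly covers the edge cases (vanishing off-diagonal elements, rank-deficient states) that the paper's one-directional step $\det\mathbb{\mathbf{F}}_\mathbf{Q}=0\Rightarrow u_2=cu_1$ passes over—but it is not a genuinely different argument.
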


\begin{proof}
The proof of saturability is already given in the proof of Theorem~\ref{theorem0}. Now, we are ready to find the criterion on singularity of QFIM corresponding to achievable QCRB in such scenarios.
Since $\lambda_\mu = \bra{e_\mu}\rho(x_1, x_2)\ket{e_\mu}$, $\sum_\mu \lambda_\mu = 1$, and the derivative of the density matrices are traceless, we have
\begin{equation*}
    \bra{e_1}\frac{\partial\rho(x_1, x_2)}{\partial x_i}\ket{e_1}=-\bra{e_0}\frac{\partial\rho(x_1, x_2)}{\partial x_i}\ket{e_0}; \quad i \in \{1,2\}.
\end{equation*}
%Let us define $u_1 \coloneqq \bra{e_0}\frac{\partial\rho(x_1, x_2)}{\partial x_1}\ket{e_0}, \quad u_2 \coloneqq \bra{e_0}\frac{\partial\rho(x_1, x_2)}{\partial x_2}\ket{e_0}$ and $s \coloneqq \bra{e_0}\frac{\partial\rho(x_1, x_2)}{\partial x_1}\ket{e_1}.$ 
The achievability condition of QCRB in Eq.~\eqref{eq: qubit achievability condition} implies that
\begin{equation}\label{eq:qubit_two_parameter_achievability}
   \bra{e_0}\frac{\partial\rho(x_1, x_2)}{\partial x_2}\ket{e_1}=c_{21} s_1,
\end{equation}
where $s_i=\bra{e_0}\frac{\partial\rho}{\partial x_i}\ket{e_1}$.
Therefore, in the eigenbasis of the encoded state,  SLD operators for different parameters of interest can be written as
\begin{align*}
L_{x_1}=\begin{pmatrix}
\frac{u_1}{\lambda_1} & s_{1} \\
s_{1}^* & \frac{-u_1}{\lambda_2}
\end{pmatrix}, \quad L_{x_2}=\begin{pmatrix}
\frac{u_2}{\lambda_1} & c_{21} s_{1} \\
c_{21} s_{1}^* & \frac{-u_2}{\lambda_2}
\end{pmatrix}.
\end{align*}
Here $u_i=\bra{e_0}\frac{\partial\rho}{\partial x_i}\ket{e_0}$.
%Here $s_{1}^* \coloneqq \bra{e_1}\frac{\partial\rho(x_1, x_2)}{\partial x_1}\ket{e_0}.$
If one of the eigenvalues of the encoded state vanishes, the corresponding matrix elements of $L_{x_i}$ are set to zero.
Substituting the expressions of SLD operators in QFIM in Eq.~\eqref{eq:QFIM with SLD}, we find that $\det\left(\mathbb{F}_Q\right)=0 \Rightarrow u_2=c_{21}u_1$. Combining this condition with the achievability condition of QCRB in Eq.~\eqref{eq:qubit_two_parameter_achievability} gives us
\begin{equation}\label{eq:derivatives_multiple_of_each_other}
    \frac{\partial \rho(x_1, x_2)}{\partial x_2}=c_{21} \frac{\partial \rho(x_1, x_2)}{\partial x_1}.
\end{equation}
Together with the fact that the derivative of density matrices are traceless, Eq.~\eqref{eq:derivatives_multiple_of_each_other} tells that $\frac{\partial \rho(x_1, x_2)}{\partial x_1}$ commutes with $\frac{\partial \rho(x_1, x_2)}{\partial x_2}$.
It completes the proof.
\end{proof}
In the next theorem, we extend the two-parameter scenrio to multiple parameter case.

\begin{theorem}
\label{theorem0}
For any estimation task involving multiple parameters encoded by an arbitrary quantum process using qubit probes, the quantum Fisher information matrix associated with an achievable quantum Cram\'er–Rao bound is not invertible if number of parameters to be estimated is greater than two.
\end{theorem}

\begin{proof}
%Let us start with finding the achievability condition of the quantum Cram\'er-Rao bound for any estimation task of two parameters $\{x_1,x_2\}$ encoded by an arbitrary quantum process using qubit probes, with $\rho(x_1, x_2)$ being the encoded state. 
%Furthermore, 
As the simplest situation of Eq.~\eqref{eq: qubit achievability condition} when the encoded state is a real Hermitian matrix, the achievability condition of QCRB in Eq.~\eqref{eq: qubit achievability condition} is automatically satisfied.

The QFIM, incorporating this saturability condition, is easy to calculate. The elements of QFIM have the expression
\begin{equation*}
    [\mathbb{F}_Q]_{ij}=s_{i}s_j^{*}+\frac{u_{i}}{\sqrt{\lambda_1\lambda_2}}\frac{u_j^{*}}{\sqrt{\lambda_1\lambda_2}}.
\end{equation*}
The key point to note is, this matrix is sum of two rank one matrix (outer product of vectors). So $\mathbb{F}_Q$ can have maximum rank $2$, whereas the dimension of $\mathbb{F}_Q$ is $N\times N$, with $N$ being the number of parameters. This fact implies $\det[\mathbb{F}_Q]=0$ for $N>2.$ Which means $\mathbb{F}_Q$ is singular if number of parameter to estimate, exceeds $2.$

Thus, the proof is completed.
\end{proof}

Note that Theorem~\ref{theorem0} tells when multiparameter QCRB is not a proper bound to use. 

\section{Estimation of multiple parameters encoded by a measurement}
\label{multiple-parameters-measurement}
In many quantum multiparameter estimation scenarios, the QCRB ceases to be a valid metrological bound for two reasons discussed in Sec.~\ref{pre}, e.g.,
the simultaneous estimation of any two unitarily encoded parameters using qubit probes, where the QCRB achievability constraint always forces the QFIM singular~\cite{Paris2024}.
Motivated by this we examine whether similar limitations arise in simultaneous multiple measurement–encoded parameter estimation. Here, we focus on simultaneous estimation of multiple parameters for the most general two-outcome qubit POVM using qubit probes.
Throughout these analyses, we set $\hat{n} \coloneqq (\sin\theta, 0, \cos\theta)$ with $\theta \in [0, 2\pi)$ for the sake of simplicity, even though a general unit vector $\hat{n}$ is parameterized by two angular parameters. 
For the general two-outcome POVM in Eq.~\eqref{eqn-measurement}, we consider four multiple measurement-encoded parameter estimation tasks using qubit probes: simultaneous estimation of (i) $\alpha$ and $\beta$, (ii) $\alpha$ and $\theta$, (iii) $\beta$ and $\theta$, and (iv) all three parameters $\alpha$, $\beta$, and $\theta$. 
For each scenario, we analyze two types of estimation errors: the OR error and the OF error.

Let us suppose the parameters are encoded by the measurement $\{E_i\}$ (in Eq.~\eqref{eqn-measurement}) into the general qubit input state (in Eq.~\eqref{eq:general_input_state}). If the encoding measurement outcomes are recorded, then the normalized encoded states are $\rho_{f_1}$ and $\rho_{f_2}$ respectively, when $E_1$ and $E_2$ get clicked, obtained with 
probabilities $Q_1 \coloneqq \alpha+r\beta(\hat{n}\cdot\hat{r})$, and $Q_2 \coloneqq 1-\alpha-r\beta(\hat{n}\cdot\hat{r})$,
where
\begin{align*}
    &\rho_{f_1}=\frac{1}{2}\left(\mathbb{I}_2+\frac{rg_1}{Q_1}\sigma_{\hat{r}}+\frac{\beta+r(\alpha-g_1)(\hat{n}\cdot\hat{r})}{Q_1}\sigma_{\hat{n}}\right),\\
    &\rho_{f_2}=\frac{1}{2}\left(\mathbb{I}_2+\frac{rg_2}{Q_2}\sigma_{\hat{r}}-\frac{\beta-r(1-\alpha-g_2)(\hat{n}\cdot\hat{r})}{Q_2}\sigma_{\hat{n}}\right).
\end{align*}
Here $g_1 \coloneqq \sqrt{\alpha^2-\beta^2}$, $g_2 \coloneqq \sqrt{(1-\alpha)^2-\beta^2}$. On the other hand, when encoding measurement outcomes are ignored, the encoded state is given by
\begin{align}\label{eq:forgotten_state}
    \rho_F=\frac{1}{2}\big[\mathbb{I}_2+r(\hat{n}\cdot \hat{r})\sigma_{\hat{n}}+rh(\alpha,\beta)\left[\sigma_{\hat{r}}-(\hat{n}\cdot \hat{r})\sigma_{\hat{n}}\right]\big]
\end{align}
with $h(\alpha,\beta) \coloneqq \left(\sqrt{\alpha^2-\beta^2}+\sqrt{(1-\alpha)^2-\beta^2}\right).$
\\
\\
\textbf{Case I (Estimation of $\alpha$ and $\beta$):} 
We study the simultaneous estimation of the two measurement coefficients, $\alpha$ and $\beta$, of the general two-outcome qubit POVM given in Eq.~\eqref{eqn-measurement}.

\begin{theorem}
    For the simultaneous estimation of two measurement coefficients of a general qubit POVM, the quantum Fisher information matrix associated with any achievable quantum Cram\'er–Rao bound is always singular, irrespective of whether the encoding measurement outcomes are recorded or ignored.
\end{theorem}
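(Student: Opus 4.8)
The plan is to reduce the whole statement to Theorem~\ref{theorem1}. With $\hat n$ held fixed, the two-coefficient problem involves three encoded states: the conditional states $\rho_{f_1},\rho_{f_2}$ when the outcomes are recorded, and the non-selective state $\rho_F$ when they are forgotten. Theorem~\ref{theorem1} tells us that a QFIM associated with an achievable QCRB is invertible \emph{only} when $\partial_{\alpha}\rho$ and $\partial_{\beta}\rho$ fail to commute. I would therefore prove, in each of the three cases, that $\partial_{\alpha}\rho$ and $\partial_{\beta}\rho$ commute—in fact that they are real scalar multiples of one another—so the invertibility criterion can never be satisfied and the QFIM is forced to be singular.

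The key structural fact I would establish first is that in every case the encoded state depends on the pair $(\alpha,\beta)$ only through a \emph{single} scalar function. For the outcome-recorded states this is because $\rho_{f_i}=\sqrt{E_i}\,\rho_{\text{in}}\,\sqrt{E_i}/\Tr[\sqrt{E_i}\,\rho_{\text{in}}\,\sqrt{E_i}]$ is invariant under rescaling $\sqrt{E_i}$ by any constant. Diagonalizing $E_1$ in Eq.~\eqref{eqn-measurement} along $\pm\hat n$ gives $\sqrt{E_1}=p\,\mathbb{I}_2+q\,\sigma_{\hat n}=p(\mathbb{I}_2+t\,\sigma_{\hat n})$ with $t\coloneqq q/p$, where $p^2+q^2=\alpha$ and $2pq=\beta$; the prefactor $p$ cancels in the normalized state, so $\rho_{f_1}$ is a function of the single variable $t=t(\alpha,\beta)$ alone, and analogously $\rho_{f_2}$ depends only on the corresponding ratio built from $\sqrt{E_2}$. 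For the forgotten case, Eq.~\eqref{eq:forgotten_state} exhibits $\rho_F$ as depending on $(\alpha,\beta)$ only through $h(\alpha,\beta)$.

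Given this, the conclusion follows by the chain rule: writing $\rho=F(t)$ (or $\rho_F=G(h)$) gives $\partial_\alpha\rho=F'(t)\,\partial_\alpha t$ and $\partial_\beta\rho=F'(t)\,\partial_\beta t$, so the two partials are proportional and hence commute, since two traceless Hermitian qubit operators commute iff their Bloch vectors are parallel. Proportional derivatives also make $\bra{e_0}\partial_\alpha\rho\ket{e_1}$ and $\bra{e_0}\partial_\beta\rho\ket{e_1}$ real multiples of each other, so the achievability hypothesis of Theorem~\ref{theorem1} is met (the statement is non-vacuous), while its non-commuting requirement fails; the $2\times2$ QFIM then has rank at most one and is singular. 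As the same argument applies verbatim to $\rho_{f_1}$, $\rho_{f_2}$, and $\rho_F$, singularity holds whether the outcomes are recorded or ignored.

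The hard part will not be the core argument but the careful handling of degenerate and boundary configurations: points where $F'(t)$ or $G'(h)$ vanishes, so both derivatives are zero, and points where an eigenvalue of the encoded state reaches zero and the corresponding SLD entries are truncated exactly as in the proof of Theorem~\ref{theorem1}. In each such case the QFIM can only lose further rank, so the singularity conclusion is preserved; before invoking Theorem~\ref{theorem1} I would verify explicitly that the single-parameter dependence is genuine—in particular that the cancellation of the scale $p$ in the outcome-recorded states is exact and not an artifact of the chosen parametrization.
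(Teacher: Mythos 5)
Your proposal is correct and follows the same overall strategy as the paper: reduce everything to Theorem~\ref{theorem1} by showing that $\partial_\alpha\rho$ and $\partial_\beta\rho$ commute for each of $\rho_{f_1}$, $\rho_{f_2}$, and $\rho_F$, so the invertibility criterion can never hold. Where you differ is in how the commutation is established. The paper fixes $\hat n=(0,0,1)$, computes the Bloch vector of $\rho_{f_1}$ explicitly, and reads off that $\partial_{x_1}\rho_{f_1}$ and $\partial_{x_2}\rho_{f_1}$ are both scalar multiples of one common traceless matrix $A$ (and likewise $\partial_{x_i}\rho_F=\tfrac{r h_{x_i}}{2}\mathbf{L}$ in the forgotten case). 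You instead give a structural argument: since $\sqrt{E_1}=p\,\mathbb{I}_2+q\,\sigma_{\hat n}$ with $p^2+q^2=\alpha$ and $2pq=\beta$, and the normalized post-measurement state is invariant under rescaling $\sqrt{E_1}$, each conditional state depends on $(\alpha,\beta)$ only through the single ratio $t=q/p$, so the chain rule forces the two partials to be proportional; for $\rho_F$ the single scalar is $h(\alpha,\beta)$, exactly as the paper also observes. Your route buys something the explicit computation does not make transparent: it shows the singularity is a consequence of a one-dimensional functional dependence and is therefore automatically stable under any smooth reparametrization of the coefficient pair, without recomputing Bloch components. The paper's computation, in exchange, exhibits the proportionality constants ($x_2$ versus $-x_1$, up to a common factor) concretely. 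Your attention to degenerate points ($F'(t)=0$, vanishing eigenvalues) is appropriate and only reinforces singularity, matching the paper's convention of truncating SLD elements outside the support. One small point worth making explicit in a final write-up: proportionality of the derivatives with a \emph{real} ratio also yields $\bra{e_0}\partial_\alpha\rho\ket{e_1}=c\,\bra{e_0}\partial_\beta\rho\ket{e_1}$ with $c$ real, so the achievability clause of Theorem~\ref{theorem1} is genuinely satisfied here and the theorem applies non-vacuously, which you correctly note.
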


\begin{proof}
For convenience, we set $\alpha = x_1 $ and $\beta = x_2$.

To analyze the case where measurement outcomes are recorded, we set $\hat{n}=(0,0,1)$ without loss of generality.
Let the Bloch vector of the input state be $\vec{r}=(r_x,r_y,r_z)$ with $r_x,r_y,r_z\in[0,1]$ and $r_x^2+r_y^2+r_z^2\leq1$.
The encoded state corresponding to an $E_1$ click, $\rho_{f_1}$, can then be simplified in terms of $(r_x,r_y,r_z)$ as 
\begin{align*}
    &\rho_{f_1}=\frac{1}{2}\left(
    \mathbb{I}_2+\frac{x_2+r_z x_1}{x_1+x_2 r_z}\sigma_{\hat{z}}\right)
    +\frac{1}{2}\frac{\sqrt{x_1^2-x_2^2}}{x_1+x_2 r_z}\left(r_x\sigma_{\hat{x}}+r_y\sigma_{\hat{y}}\right).
\end{align*}
And the corresponding partial derivative of the encoded state $\rho_{f_1}$ with respect to $x_1$ and $x_2$ are given by
\begin{align*}
    &\frac{\partial \rho_{f_1}}{\partial x_1}=\frac{x_2}{(x_1+x_2 r_z)^2}A,\quad \frac{\partial \rho_{f_1}}{\partial x_2}=-\frac{x_1}{(x_1+x_2 r_z)^2}A;\notag \\
    &A\coloneqq \frac{x_2+x_1 r_z}{\sqrt{x_1^2-x_2^2}}(r_x\sigma_{\hat{x}}+r_y\sigma_{\hat{y}})-(1-r_z^2)\sigma_{\hat{z}}.
\end{align*}
Note that $\frac{\partial \rho_{f_1}}{\partial x_1}$ commutes with $\frac{\partial \rho_{f_1}}{\partial x_2}$ whatever the parameters of interest $x_1, x_2$ and the form of input state are. 
The same result holds as well when $E_2$ clicks.
%similar calculation can be followed by this for the case when $E_2$ clicks with same conclusion. 
Therefore, according to Theorem.~\ref{theorem1}, the QFIM corresponding to the achievable QCRB must be singular in the simultaneous estimation of $\alpha$ and $\beta$ when the encoding measurement outcomes are recorded.

Also, it is very easy to see that when the encoding measurement outcomes are forgotten, the derivatives of the encoded state in Eq.~\eqref{eq:forgotten_state}, with respect to encoded parameters, will have the form 
\begin{align*}
    \frac{\partial\rho_F}{\partial x_i}=\frac{rh_{x_i}}{2}\mathbf{L}, 
\end{align*}
where $h_{x_i} \coloneqq (\partial h(x_1,x_2)/\partial x_i)$, and $\mathbf{L} \coloneqq \left[\sigma_{\hat{r}}-(\hat{n}\cdot \hat{r})\sigma_{\hat{n}}\right]$.
Therefore, like the encoding measurement outcome remembering scenario, $[\frac{\partial\rho_F}{\partial x_1}, \frac{\partial\rho_F}{\partial x_2}]  = 0 $ holds automatically as well whatever the parameters of interest $x_1, x_2$ and the form of input state are. 
Hence, the QFIM corresponding to the achievable QCRB is singular in this case, according to Theorem.~\ref{theorem1}.

Thus, it completes the proof.
\end{proof}

From now on, we study the simultaneous estimation of multiple measurement encoding parameters, which includes the parameter  $\theta$ related to measurement direction, such that with $\theta\in[0,2\pi)$ the measurement direction can be written as $\hat{n}\coloneqq (\cos\theta,0,\sin\theta)$ without loss of generality.
\\
\\
\textbf{Case II (Estimation of $\alpha$ and $\theta$):} 
Here, we study the simultaneous estimation of two parameters, $\alpha$ and $\theta$, of the measurement given in Eq.~\eqref{eqn-measurement}. Our goal is to examine whether the QFIM associated with the achievable QCRB becomes singular or not in both the encoding measurement outcome-recorded and outcome-forgotten scenarios, similar to the case of simultaneous estimation of the encoding measurement coefficients $\alpha$ and $\beta$. For convenience, we set $\alpha = x_1 $ and $\theta = x_2$.

Let us now start with the encoding measurement outcome–recorded scenario. We choose the Bloch vector of the input state as $\vec{r}= r\,\hat{r}$ and define $\hat{r}\coloneqq -\partial\hat{n}/\partial\theta$, where $r\in[0,1]$ denotes the norm of the Bloch vector of the input state and $\hat{n}\coloneqq (\cos\theta,0,\sin\theta)$ denotes the unit vector characterizing the measurement direction with $\theta\in[0,2\pi)$ and we can choose such $\hat{n}$ without loss of generality. With this choice, $\hat{n}\cdot\hat{r}=0$.

When the measurement, $\{E_i\}$, is performed on the input state and outcomes $E_1$ and $E_2$ click, the corresponding encoded states are
\begin{align*}
    \rho_{f_1}&=\frac{1}{2}\left[\mathbb{I}_2+r\sqrt{1-\frac{\beta^2}{x_1^2}}\,\sigma_{\hat{r}}+\frac{\beta}{\alpha}\sigma_{\hat{n}}\right],\\
    \rho_{f_2}&=\frac{1}{2}\left[\mathbb{I}_2+r\sqrt{1-\frac{\beta^2}{(1-x_1)^2}}\,\sigma_{\hat{r}}-\frac{\beta}{1-x_1}\sigma_{\hat{n}}\right],
\end{align*}
respectively.
It is straightforward to see that $\partial\rho_{f_1}/\partial x_1$ and $\partial\rho_{f_1}/\partial x_2$ are hermitian matrices with real elements, due to the chosen orientations of the measurement direction and the input-state Bloch vector. Therefore, by Corollary~\ref{corollary:achivability}, the QCRB corresponding to $\rho_{f_1}$ is achievable for arbitrary values of the Bloch vector norm and parameters $x_1$, $\beta$, and $x_2$, provided the vectors are chosen in this manner. By the same reasoning, the QCRB corresponding to $\rho_{f_2}$ is also achievable.
Furthermore, since $\hat{n}\cdot\hat{r}=0$, we have
$\left[\frac{\partial \rho_{f_1}}{\partial x_1},\frac{\partial \rho_{f_1}}{\partial x_2}\right]=0$
if and only if $r^2=1,$
and similarly,
$\left[\frac{\partial \rho_{f_2}}{\partial x_1},\frac{\partial \rho_{f_2}}{\partial x_2}\right]=0$ if and only if $r^2=1$.
Hence, whenever the input state is not pure, the QFIM associated with the achievable QCRB is invertible (nonsingular), independent of the Bloch-vector norm and the parameter values, as long as the Bloch vector and measurement direction satisfy the chosen orthogonality condition. Therefore, in the simultaneous estimation of $\alpha$ and $\theta$, there exist many choices of input states and values of $\alpha, \beta$, and $\theta$ for which the QCRB is achievable and valid when the encoding measurement outcome is recorded.

Let us now look at the simultaneous estimation case of $\alpha$ and $\theta$, when measurement outcomes are forgotten. 
We choose the Bloch vector of the input state as $\vec{r}\coloneqq (r_x,0,r_z)$ with $r_x\geq0$, $r_z\geq0$ and $r_x^2+r_y^2\leq1$. The encoded state with these choices, in the encoding measurement outcome forgotten scenarios, becomes
\begin{align*}
    \rho_{F}=\frac{1}{2}\left[\mathbb{I}_2+\tilde{r}_x(x_1,x_2)\sigma_{x}+\tilde{r}_z(x_1,x_2)\sigma_z\right].
\end{align*}
where
\begin{align*}
    &\tilde{r}_x(x_1,x_2) \coloneqq r_xh+(1-h)(r_z\cos x_2+r_x\sin x_2)\sin x_2,\\
    &\tilde{r}_z(x_1,x_2) \coloneqq r_zh+(1-h)(r_z\cos x_2+r_x\sin x_2)\cos x_2,
\end{align*}
and $h(x_1,\beta) \coloneqq \left(\sqrt{x_1^2-\beta^2}+\sqrt{(1-x_1)^2-\beta^2}\right)$.
Since $\partial\rho_{F}/\partial x_1$ and $\partial\rho_{F}/\partial x_2$ are Hermitian matrices with real elements, due to the chosen orientations of the measurement direction and the input-state Bloch vector, by Corollary~\ref{corollary:achivability}, the QCRB corresponding to $\rho_{F}$ is automatically achievable for arbitrary values of the Bloch vector norm and parameters $x_1$, $\beta$, and $x_2$, provided they are chosen in this manner.
The commutation condition
$\left[\frac{\partial \rho_{F}}{\partial x_1},\frac{\partial \rho_{F}}{\partial x_2}\right]=0$
holds if and only if
\begin{align*}
    \frac{\partial \tilde{r}_x/\partial x_1}{\partial \tilde{r}_z/\partial x_1}
    =
    \frac{\partial \tilde{r}_x/\partial x_2}{\partial \tilde{r}_z/\partial x_2},
\end{align*}
which simplifies to
\begin{equation}
\label{naam_na_jana_pakhi}
    \tan x_2=\tan(\phi_1-2x_2),
\end{equation}
where $r_x/r_z\coloneqq \tan\phi_1$.
Note that there exist many choices of $r_x$, $r_z$, and $x_2$ for which Eq.~\eqref{naam_na_jana_pakhi} is not satisfied, and therefore it follows from Theorem.~\ref{theorem1} that the QFIM corresponding to the achievable QCRB is nonsingular for a broad class of input states and parameters in the simultaneous estimation of $\alpha$ and $\theta$ when the encoding measurement outcomes are forgotten.

Furthermore, note that when the encoding measurement $\{E_i\}$ is of the form
\begin{align*}
    E_1 &= \alpha \mathbb{I}_2+\beta\sin\theta\,\sigma_x+\beta\cos\theta\,\sigma_z, \\
    E_2 &= (1-\alpha)\mathbb{I}_2-\beta\sin\theta\,\sigma_x-\beta\cos\theta\,\sigma_z,
\end{align*}
with $\theta\in[0,2\pi)\setminus\{\pi/4,3\pi/4\}$, and the input state chosen as
\[
\rho_{\text{in}}=\frac{1}{2}\Big[\mathbb{I}_2+r\cos\theta\,\sigma_x-r\sin\theta\,\sigma_z\Big],
\]
with $r<1$, the QFIM is invertible whenever the QCRB is achievable in the simultaneous estimation of $\alpha$ and $\theta$, in both the outcome-remembered and outcome-forgotten scenarios, irrespective of the values of $\alpha$ and $\beta \leq \alpha$.
\\
\\
\textbf{Case III (Estimation of $\beta$ and $\theta$):} 
In the simultaneous estimation of $\beta$ and $\theta$, the achievability of QCRB is readily satisfied  when the same parameter regime of input probe and the set of parameters of interest are chosen as in Case II.

Let us define the SLD operators corresponding to $\beta$ and $\theta$ by $L_\alpha$ and $L_\beta$ respectively. From the result obtained in Case II,
the SLD operators corresponding to $\alpha$ and $\beta$ can be written as 
\begin{align}\label{eq: alpha and beta interdependent}
    L_{\alpha}=c_{\alpha}R, \quad L_{\beta}=c_{\beta}R,
\end{align}
where $c_{\alpha}$ and $c_{\beta}$ are two real numbers and $R$ is a matrix. 
Assuming $L_{\theta}$ to be the SLD corresponding to parameter $\theta$, the QFIM corresponding to the simultaneous estimation of $\alpha$ and $\theta$ then can be written as
\begin{align*}
    \begin{pmatrix}
c_{\alpha}^2 \Tr\left[\rho R^2\right] & c_{\alpha}\Tr\left[\rho \frac{RL_{\theta}+L_{\theta}R}{2}\right] \\
c_{\alpha}\Tr\left[\rho \frac{RL_{\theta}+L_{\theta}R}{2}\right] & \Tr\left[\rho L_{\theta}^2\right]
\end{pmatrix}
\end{align*} and the QFIM corresponding to the simultaneous estimation of $\beta$ and $\theta$ then can be written as
\begin{align*}
    \begin{pmatrix}
c_{\beta}^2 \Tr\left[\rho R^2\right] & c_{\beta}\Tr\left[\rho \frac{RL_{\theta}+L_{\theta}R}{2}\right] \\
c_{\beta}\Tr\left[\rho \frac{RL_{\theta}+L_{\theta}R}{2}\right] & \Tr\left[\rho L_{\theta}^2\right]
\end{pmatrix}.
\end{align*}
Note that the invertibility of the QFIM corresponding to the simultaneous estimation of $\alpha$ and $\theta$ automatically guarantees the invertibility of the QFIM corresponding to the simultaneous estimation of $\beta$ and $\theta$ if the trivial case $c_{\beta}=0$ is not considered, which can be avoided
by some choices of $\alpha$ and $\beta$.
We have seen in Case II that, for some parameters of interest values and input states, the QFIM corresponding to simultaneous estimation of $\alpha$ and $\theta$ is invertible irrespective of remembering or forgetting measurement outcomes. These arguments, combined with the achievability of QCRB in simultaneous estimation $\beta$ and $\theta$, we conclude that there are at least some choices of parameters and input states for which QCRB corresponding to $\beta$ and $\theta$ is valid and achievable and it holds true independent of remembering or forgetting encoding measurement outcomes.
\\
\\
\textbf{Case IV (Estimation of $\alpha$, $\beta$ and $\theta$):} In the estimation of  $\alpha$, $\beta$, and $\theta$ simultaneously, 
the QFIM is always singular 
due to the intrinsic interdependence between $\alpha$ and $\beta$ (discussed in Case II), in either
the outcome-remembered or outcome-forgotten scenarios. This follows directly from Eq.~\eqref{eq: alpha and beta interdependent}, which holds for any choice of input state and a set of parameter values.

\section{Conclusion}
\label{conclusion}
In this work, we studied 
the quantum metrology of parameters encoded by a quantum measurement. We considered two scenarios in the estimation of a parameter or a set of parameters characterizing a general two-outcome qubit measurement: one is when the measurement outcomes are known and another one is when the outcomes are ignored.

We found that forgetting the outcomes provides better precision in a large variety of estimation scenarios when a parameter is encoded by a two-outcome qubit measurement. We also derived a necessary criterion on achieving precision advantage in the outcome-recorded scenarios compared to the forgotten ones in such estimation scenarios.
Moreover, we established a necessary and sufficient criterion for the simultaneous estimation of multiple parameters, encoded by an arbitrary quantum process using qubit probes, 
such that the achievable QCRB is a valid bound. Furthermore, for simultaneous estimation of two 
parameters encoded via two-outcome qubit measurement, not related to measurement direction, the QFIM associated with the achievable QCRB is always singular, irrespective of whether the measurement outcomes are recorded or ignored, mirroring the behavior found in estimation of two unitarily-encoded parameters with qubit probes. 
In contrast, when the two simultaneously estimated parameters contain one parameter related to measurement direction, the QFIM need not be singular in either the outcome-remembered or outcome-forgotten scenarios.

\vspace{0.1cm}
\section*{Acknowledgments}
We acknowledge the use of Armadillo. 

\bibliography{sharpness_est}

@article{Paris2024,
  title = {Dimension matters: precision and incompatibility in multi-parameter quantum estimation models},
  volume = {9},
  ISSN = {2058-9565},
  url = {http://dx.doi.org/10.1088/2058-9565/ad7498},
  DOI = {10.1088/2058-9565/ad7498},
  number = {4},
  journal = {QST},
  publisher = {IOP Publishing},
  author = {Candeloro,  Alessandro and Pazhotan,  Zahra and Paris,  Matteo G A},
  year = {2024},
  month = sep,
  pages = {045045}
}

@article{PhysRevLett.72.3439,
  title = {Statistical distance and the geometry of quantum states},
  author = {Braunstein, S. L. and Caves, C. M.},
  journal = {Phys. Rev. Lett.},
  volume = {72},
  issue = {22},
  pages = {3439--3443},
  numpages = {0},
  year = {1994},
  month = {May},
  publisher = {American Physical Society},
  doi = {10.1103/PhysRevLett.72.3439},
  url = {https://link.aps.org/doi/10.1103/PhysRevLett.72.3439}
}

@article{PhysRevLett.96.010401,
  title = {Quantum Metrology},
  author = {Giovannetti, Vittorio and Lloyd, Seth and Maccone, Lorenzo},
  journal = {Phys. Rev. Lett.},
  volume = {96},
  issue = {1},
  pages = {010401},
  numpages = {4},
  year = {2006},
  month = {Jan},
  publisher = {American Physical Society},
  doi = {10.1103/PhysRevLett.96.010401},
  url = {https://link.aps.org/doi/10.1103/PhysRevLett.96.010401}
}

@Article{Giovannetti2011,
author={Giovannetti, Vittorio
and Lloyd, Seth
and Maccone, Lorenzo},
title={Advances in quantum metrology},
journal={Nat. Photon.},
year={2011},
month={Apr},
day={01},
volume={5},
number={4},
pages={222},
issn={1749-4893},
doi={10.1038/nphoton.2011.35},
url={https://doi.org/10.1038/nphoton.2011.35}
}

@article{PhysRevA.110.012620,
  title = {Enhancing precision of atomic clocks by tuning disorder in accessories},
  author = {Bhattacharyya, Aparajita and Ghoshal, Ahana and Sen, Ujjwal},
  journal = {Phys. Rev. A},
  volume = {110},
  issue = {1},
  pages = {012620},
  numpages = {16},
  year = {2024},
  month = {Jul},
  publisher = {American Physical Society},
  doi = {10.1103/PhysRevA.110.012620},
  url = {https://link.aps.org/doi/10.1103/PhysRevA.110.012620}
}

@article{bhattacharyya2025quantumsensingevenversus,
  title = {Quantum sensing of even- versus odd-body interactions},
  author = {Bhattacharyya, Aparajita and Saha, Debarupa and Sen, Ujjwal},
  journal = {Phys. Rev. A},
  volume = {112},
  issue = {3},
  pages = {L030603},
  numpages = {8},
  year = {2025},
  month = {Sep},
  publisher = {American Physical Society},
  url = {https://link.aps.org/doi/10.1103/3f8n-b1kp}
}

@book{holevo,
    author = {A S Holevo},
    title = {Probabilistic and statistical aspects of quantum theory, Vol. 1},
    publisher = {Springer Science \& Business Media},
    year = {2011}
}

@article{agarwal2025quantumsensingultracoldsimulators,
      title={Quantum sensing with ultracold simulators in lattice and ensemble systems: a review}, 
      author={Keshav Das Agarwal and Sayan Mondal and Ayan Sahoo and Debraj Rakshit and Sen (De), Aditi and Ujjwal Sen},
      year={2025},
      journal={arXiv:2507.06348},
      url={https://arxiv.org/abs/2507.06348}, 
}

@article{saha2025entanglementconstrainedquantummetrologyrapid,
      title={Entanglement-Constrained Quantum Metrology: Rapid Low-Entanglement Gains, Tapered High-Level Growth}, 
      author={Debarupa Saha and Ujjwal Sen},
      year={2025},
      journal={arXiv:2507.03512},
      url={https://arxiv.org/abs/2507.03512}, 
}

@article{bhattacharyya2025precisionestimatingindependentlocal,
      title={Precision in estimating independent local fields: attainable bound and indispensability of genuine multiparty entanglement}, 
      author={Aparajita Bhattacharyya and Ujjwal Sen},
      year={2024},
      journal={arXiv:2407.20142},
      url={https://arxiv.org/abs/2407.20142}, 
}

@article{mondal2025optimalquantumprecisionnoise,
      title={Optimal quantum precision in noise estimation: Is entanglement necessary?}, 
      author={Shuva Mondal and Priya Ghosh and Ujjwal Sen},
      year={2025},
     journal={arXiv:2507.22413},
      url={https://arxiv.org/abs/2507.22413}, 
}

@article{pal2025rolephaseoptimalprobe,
      title={Role of phase of optimal probe in noncommutativity vs coherence in quantum multiparameter estimation}, 
      author={Ritopriyo Pal and Priya Ghosh and Ahana Ghoshal and Ujjwal Sen},
      year={2025},
      journal={arXiv:2507.04824},
      url={https://arxiv.org/abs/2507.04824}, 
}

@article{mondal2024multicriticalquantumsensorsdriven,
      title={Multicritical quantum sensors driven by symmetry-breaking}, 
      author={Sayan Mondal and Ayan Sahoo and Ujjwal Sen and Debraj Rakshit},
      year={2024},
      journal={arXiv:2407.14428},
      url={https://arxiv.org/abs/2407.14428}, 
}

@article{agarwal2025criticalquantummetrologyusing,
      title={Critical quantum metrology using non-Hermitian spin model with RT-symmetry}, 
      author={Keshav Das Agarwal and Tanoy Kanti Konar and Leela Ganesh Chandra Lakkaraju and Sen (De), Aditi},
      year={2025},
      journal={arXiv:2503.24331},
      url={https://arxiv.org/abs/2503.24331}, 
}

@article{Agrawal2025indefinitetime,
  doi = {10.22331/q-2025-07-03-1785},
  url = {https://doi.org/10.22331/q-2025-07-03-1785},
  title = {Indefinite {T}ime {D}irected {Q}uantum {M}etrology},
  author = {Agrawal, Gaurang and Halder, Pritam and Sen (De), Aditi},
  journal = {{Quantum}},
  issn = {2521-327X},
  publisher = {{Verein zur F{\"{o}}rderung des Open Access Publizierens in den Quantenwissenschaften}},
  volume = {9},
  pages = {1785},
  month = jul,
  year = {2025}
}

@article{RevModPhys.90.035005,
  title = {Quantum metrology with nonclassical states of atomic ensembles},
  author = {Pezz\`e, Luca and Smerzi, Augusto and Oberthaler, Markus K. and Schmied, Roman and Treutlein, Philipp},
  journal = {Rev. Mod. Phys.},
  volume = {90},
  issue = {3},
  pages = {035005},
  numpages = {70},
  year = {2018},
  month = {Sep},
  publisher = {American Physical Society},
  doi = {10.1103/RevModPhys.90.035005},
  url = {https://link.aps.org/doi/10.1103/RevModPhys.90.035005}
}

@article{
doi:10.1126/science.1104149,
author = {Vittorio Giovannetti  and Seth Lloyd  and Lorenzo Maccone },
title = {Quantum-Enhanced Measurements: Beating the Standard Quantum Limit},
journal = {Science},
volume = {306},
number = {5700},
pages = {1330-1336},
year = {2004},
doi = {10.1126/science.1104149},
URL = {https://www.science.org/doi/abs/10.1126/science.1104149}}

@article{RevModPhys.89.035002,
  title = {Quantum sensing},
  author = {Degen, C. L. and Reinhard, F. and Cappellaro, P.},
  journal = {Rev. Mod. Phys.},
  volume = {89},
  issue = {3},
  pages = {035002},
  numpages = {39},
  year = {2017},
  month = {Jul},
  publisher = {American Physical Society},
  doi = {10.1103/RevModPhys.89.035002},
  url = {https://link.aps.org/doi/10.1103/RevModPhys.89.035002}
}

@article{RevModPhys.90.035006,
  title = {Quantum-enhanced measurements without entanglement},
  author = {Braun, Daniel and Adesso, Gerardo and Benatti, Fabio and Floreanini, Roberto and Marzolino, Ugo and Mitchell, Morgan W. and Pirandola, Stefano},
  journal = {Rev. Mod. Phys.},
  volume = {90},
  issue = {3},
  pages = {035006},
  numpages = {47},
  year = {2018},
  month = {Sep},
  publisher = {American Physical Society},
  doi = {10.1103/RevModPhys.90.035006},
  url = {https://link.aps.org/doi/10.1103/RevModPhys.90.035006}
}

@Article{Helstrom1969,
author={Helstrom, Carl W.},
title={Quantum detection and estimation theory},
journal={J. Stat. Phys.},
year={1969},
month={Jun},
day={01},
volume={1},
number={2},
pages={231-252},
issn={1572-9613},
doi={10.1007/BF01007479},
url={https://doi.org/10.1007/BF01007479}
}

@article{HOLEVO1973337,
title = {Statistical decision theory for quantum systems},
journal = {JMVA},
volume = {3},
number = {4},
pages = {337-394},
year = {1973},
issn = {0047-259X},
doi = {https://doi.org/10.1016/0047-259X(73)90028-6},
url = {https://www.sciencedirect.com/science/article/pii/0047259X73900286},
author = {A S Holevo}
}

@Article{Pirandola2018,
author={Pirandola, S.
and Bardhan, B. R.
and Gehring, T.
and Weedbrook, C.
and Lloyd, S.},
title={Advances in photonic quantum sensing},
journal={Nat. Photon.},
year={2018},
month={Dec},
day={01},
volume={12},
number={12},
pages={724-733},
issn={1749-4893},
doi={10.1038/s41566-018-0301-6},
url={https://doi.org/10.1038/s41566-018-0301-6}
}

@article{PhysRevLett.94.020502,
  title = {Entanglement Assisted Metrology},
  author = {Cappellaro, P. and Emerson, J. and Boulant, N. and Ramanathan, C. and Lloyd, S. and Cory, D. G.},
  journal = {Phys. Rev. Lett.},
  volume = {94},
  issue = {2},
  pages = {020502},
  numpages = {4},
  year = {2005},
  month = {Jan},
  publisher = {American Physical Society},
  doi = {10.1103/PhysRevLett.94.020502},
  url = {https://link.aps.org/doi/10.1103/PhysRevLett.94.020502}
}

@article{chaki2025nonpositivemeasurementsarentbeneficial,
      title={Non-positive measurements aren't beneficial in quantum metrology for unitary encoding, but can be for open schemes}, 
author={Paranjoy Chaki and Debarupa Saha and Kornikar Sen and Ujjwal Sen},
      year={2025},   
    journal={arXiv:2509.24585},
      url={https://arxiv.org/abs/2509.24585}, 
}

@article{Mehboudi2019-thermometry,
doi = {10.1088/1751-8121/ab2828},
url = {https://doi.org/10.1088/1751-8121/ab2828},
year = {2019},
month = {jul},
publisher = {IOP Publishing},
volume = {52},
number = {30},
pages = {303001},
author = {Mehboudi, Mohammad and Sanpera, Anna and Correa, Luis A},
title = {Thermometry in the quantum regime: recent theoretical progress},
journal = {J. Phys. A: Math. Theor.}
}

@article{sarkar2025,
  title={Comparing physical quantities with finite-precision: beyond standard metrology and an illustration for cooling in quantum processes},
  author={Sarkar, Anindita and Chaki, Paranjoy and Ghosh, Priya and Sen, Ujjwal},
  journal={arXiv:2510.24484},
  year={2025},
  url = {
https://doi.org/10.48550/arXiv.2510.24484
}  
}

@article{PhysRevA.63.042304,
  title = {Quantum channel identification problem},
  author = {Fujiwara, Akio},
  journal = {Phys. Rev. A},
  volume = {63},
  issue = {4},
  pages = {042304},
  numpages = {4},
  year = {2001},
  month = {Mar},
  publisher = {American Physical Society},
  doi = {10.1103/PhysRevA.63.042304},
  url = {https://link.aps.org/doi/10.1103/PhysRevA.63.042304}
}

@article{PRXQuantum.2.010343,
  title = {Asymptotic Theory of Quantum Channel Estimation},
  author = {Zhou, Sisi and Jiang, Liang},
  journal = {PRX Quantum},
  volume = {2},
  issue = {1},
  pages = {010343},
  numpages = {25},
  year = {2021},
  month = {Mar},
  publisher = {American Physical Society},
  doi = {10.1103/PRXQuantum.2.010343},
  url = {https://link.aps.org/doi/10.1103/PRXQuantum.2.010343}
}

@article{MasahitoHayashi_2010,
doi = {10.1088/1742-6596/233/1/012016},
url = {https://dx.doi.org/10.1088/1742-6596/233/1/012016},
year = {2010},
month = {jun},
publisher = {},
volume = {233},
number = {1},
pages = {012016},
author = {Hayashi, Masahito},
title = {Quantum channel estimation and asymptotic bound},
journal = {J. Phys. Conf. Ser.}
}

@article{Akio_Fujiwara_2003,
doi = {10.1088/0305-4470/36/29/314},
url = {https://dx.doi.org/10.1088/0305-4470/36/29/314},
year = {2003},
month = {jul},
publisher = {},
volume = {36},
number = {29},
pages = {8093},
author = {Fujiwara, Akio and Imai, Hiroshi},
title = {Quantum parameter estimation of a generalized Pauli channel},
journal = {J. Phys. A Math. Gen.}
}

@article{rafal_2023,
  title = {Using Adaptiveness and Causal Superpositions Against Noise in Quantum Metrology},
  author = {Kurdzia\l{}ek, Stanis\l{}aw and G\'orecki, Wojciech and Albarelli, Francesco and Demkowicz-Dobrza\ifmmode \acute{n}\else \'{n}\fi{}ski, Rafa\l{}},
  journal = {Phys. Rev. Lett.},
  volume = {131},
  issue = {9},
  pages = {090801},
  numpages = {7},
  year = {2023},
  month = {Aug},
  publisher = {American Physical Society},
  doi = {10.1103/PhysRevLett.131.090801},
  url = {https://link.aps.org/doi/10.1103/PhysRevLett.131.090801}
}

@article{rafal_2014,
  title = {Using Entanglement Against Noise in Quantum Metrology},
  author = {Demkowicz-Dobrza\ifmmode \acute{n}\else \'{n}\fi{}ski, Rafal and Maccone, Lorenzo},
  journal = {Phys. Rev. Lett.},
  volume = {113},
  issue = {25},
  pages = {250801},
  numpages = {5},
  year = {2014},
  month = {Dec},
  publisher = {American Physical Society},
  doi = {10.1103/PhysRevLett.113.250801},
  url = {https://link.aps.org/doi/10.1103/PhysRevLett.113.250801}
}

@article{demkowicz_2012,
  title={The elusive Heisenberg limit in quantum-enhanced metrology},
  author={Demkowicz-Dobrza{\'n}ski, Rafa{\l} and Ko{\l}ody{\'n}ski, Jan and Gu{\c{t}}{\u{a}}, M{\u{a}}d{\u{a}}lin},
  journal={Nat. comm.},
  volume={3},
  number={1},
  pages={1063},
  year={2012},
  publisher={Nature Publishing Group UK London},
  url={http://dx.doi.org/10.1038/ncomms2067}
}

@article{Pirandola17,
  title = {Ultimate Precision of Adaptive Noise Estimation},
  author = {Pirandola, Stefano and Lupo, Cosmo},
  journal = {Phys. Rev. Lett.},
  volume = {118},
  issue = {10},
  pages = {100502},
  numpages = {6},
  year = {2017},
  month = {Mar},
  publisher = {American Physical Society},
  doi = {10.1103/PhysRevLett.118.100502},
  url = {https://link.aps.org/doi/10.1103/PhysRevLett.118.100502}
}

@article{RAZAVIAN2019825,
title = {Quantum metrology out of equilibrium},
journal = {Physica A: Statistical Mechanics and its Applications},
volume = {525},
pages = {825-833},
year = {2019},
issn = {0378-4371},
doi = {https://doi.org/10.1016/j.physa.2019.03.125},
url = {https://www.sciencedirect.com/science/article/pii/S0378437119303607},
author = {Sholeh Razavian and Matteo G.A. Paris},
keywords = {Quantum probes, Complex systems, Ohmic reservoirs}
}

@article{Huelga97,
  title = {Improvement of Frequency Standards with Quantum Entanglement},
  author = {Huelga, S. F. and Macchiavello, C. and Pellizzari, T. and Ekert, A. K. and Plenio, M. B. and Cirac, J. I.},
  journal = {Phys. Rev. Lett.},
  volume = {79},
  issue = {20},
  pages = {3865--3868},
  numpages = {0},
  year = {1997},
  month = {Nov},
  publisher = {American Physical Society},
  doi = {10.1103/PhysRevLett.79.3865},
  url = {https://link.aps.org/doi/10.1103/PhysRevLett.79.3865}
}

@article{Alipur14,
  title = {Quantum Metrology in Open Systems: Dissipative Cram\'er-Rao Bound},
  author = {Alipour, S. and Mehboudi, M. and Rezakhani, A. T.},
  journal = {Phys. Rev. Lett.},
  volume = {112},
  issue = {12},
  pages = {120405},
  numpages = {6},
  year = {2014},
  month = {Mar},
  publisher = {American Physical Society},
  doi = {10.1103/PhysRevLett.112.120405},
  url = {https://link.aps.org/doi/10.1103/PhysRevLett.112.120405}
}

@Article{Yousefjani17,
author={Yousefjani, R.
and Salimi, S.
and Khorashad, A. S.},
title={Noisy metrology: a saturable lower bound on quantum Fisher information},
journal={QIP},
year={2017},
month={Apr},
day={17},
volume={16},
number={6},
pages={144},
url={https://doi.org/10.1007/s11128-017-1596-9}
}

@article{Peng23,
      title={Dissipative quantum Fisher information for a general Liouvillian parameterized process}, 
      author={Jia-Xin Peng and Baiqiang Zhu and Weiping Zhang and Keye Zhang},
      year={2023},
      journal={arXiv: 2308.10183},
      url={https://arxiv.org/abs/2308.10183}, 
}

@article{Bhattacharyya24a,
  title = {Restoring metrological quantum advantage of measurement precision in a noisy scenario},
  author = {Bhattacharyya, Aparajita and Ghoshal, Ahana and Sen, Ujjwal},
  journal = {Phys. Rev. A},
  volume = {109},
  issue = {5},
  pages = {052626},
  numpages = {17},
  year = {2024},
  month = {May},
  publisher = {American Physical Society},
  doi = {10.1103/PhysRevA.109.052626},
  url = {https://link.aps.org/doi/10.1103/PhysRevA.109.052626}
}

@article{ghosh2025efficient,
  title={Efficient Estimation of Multiple Temperatures via a Collisional Model},
  author={Ghosh, Srijon and Chakraborty, Sagnik and Franco, Rosario Lo},
  journal={arXiv:2511.20448},
  year={2025},
url = {
https://doi.org/10.48550/arXiv.2511.20448}
}

@article{shor_smolin_2001PRL,
  title = {Nonadditivity of Bipartite Distillable Entanglement Follows from a Conjecture on Bound Entangled Werner States},
  author = {Shor, Peter W. and Smolin, John A. and Terhal, Barbara M.},
  journal = {Phys. Rev. Lett.},
  volume = {86},
  issue = {12},
  pages = {2681},
  numpages = {0},
  year = {2001},
  month = {Mar},
  publisher = {American Physical Society},
  doi = {10.1103/PhysRevLett.86.2681},
  url = {https://link.aps.org/doi/10.1103/PhysRevLett.86.2681}
}

@article{Carollo2018,
  title = {Uhlmann curvature in dissipative phase transitions},
  volume = {8},
  ISSN = {2045-2322},
  url = {http://dx.doi.org/10.1038/s41598-018-27362-9},
  DOI = {10.1038/s41598-018-27362-9},
  number = {1},
  journal = {Sci. Rep.},
  publisher = {Springer Science and Business Media LLC},
  author = {Carollo,  Angelo and Spagnolo,  Bernardo and Valenti,  Davide},
  year = {2018},
  month = jun 
}

@book{bell_aspect_2004, 
place={Cambridge}, 
edition={2}, 
title={Speakable and Unspeakable in Quantum Mechanics: Collected Papers on Quantum Philosophy}, 
publisher={Cambridge University Press}, 
author={Bell, J. S.}, 
year={2004}
}

@article{bell-review,
  title = {{B}ell nonlocality},
  author = {Brunner, N. and Cavalcanti, D. and Pironio, S. and Scarani, V. and Wehner, S.},
  journal = {Rev. Mod. Phys.},
  volume = {86},
  issue = {2},
  pages = {419},
  numpages = {60},
  year = {2014},
  month = {Apr},
  publisher = {American Physical Society},
  url = {https://link.aps.org/doi/10.1103/RevModPhys.86.419}
}

@article{Bennett1993,
  title = {Teleporting an unknown quantum state via dual classical and Einstein-Podolsky-Rosen channels},
  author = {Bennett, Charles H. and Brassard, Gilles and Cr\'epeau, Claude and Jozsa, Richard and Peres, Asher and Wootters, William K.},
  journal = {Phys. Rev. Lett.},
  volume = {70},
  issue = {13},
  pages = {1895--1899},
  numpages = {0},
  year = {1993},
  month = {Mar},
  publisher = {American Physical Society},
  doi = {10.1103/PhysRevLett.70.1895},
  url = {https://link.aps.org/doi/10.1103/PhysRevLett.70.1895}
}

@Article{Pirandola2015,
author={Pirandola, S.
and Eisert, J.
and Weedbrook, C.
and Furusawa, A.
and Braunstein, S. L.},
title={Advances in quantum teleportation},
journal={Nat. Photon.},
year={2015},
month={Oct},
day={01},
volume={9},
number={10},
pages={641-652},
abstract={This review covers state-of-the-art quantum teleportation technologies, from photonic qubits and optical modes to atomic ensembles, trapped atoms and solid-state systems. Open issues and potential future implementations are also discussed.},
issn={1749-4893},
doi={10.1038/nphoton.2015.154},
url={https://doi.org/10.1038/nphoton.2015.154}
}

@article{bound-entanglement-prl-1998,
  title = {Mixed-State Entanglement and Distillation: Is there a ``Bound'' Entanglement in Nature?},
  author = {Horodecki, Micha\l{} and Horodecki, Pawe\l{} and Horodecki, Ryszard},
  journal = {Phys. Rev. Lett.},
  volume = {80},
  issue = {24},
  pages = {5239--5242},
  numpages = {0},
  year = {1998},
  month = {Jun},
  publisher = {American Physical Society},
  doi = {10.1103/PhysRevLett.80.5239},
  url = {https://link.aps.org/doi/10.1103/PhysRevLett.80.5239}
}

@article{PhysRevA.89.023845,
  title = {Tradeoff in simultaneous quantum-limited phase and loss estimation in interferometry},
  author = {Crowley, Philip J. D. and Datta, Animesh and Barbieri, Marco and Walmsley, I. A.},
  journal = {Phys. Rev. A},
  volume = {89},
  issue = {2},
  pages = {023845},
  numpages = {9},
  year = {2014},
  month = {Feb},
  publisher = {American Physical Society},
  doi = {10.1103/PhysRevA.89.023845},
  url = {https://link.aps.org/doi/10.1103/PhysRevA.89.023845}
}

@article{PhysRevA.94.052108,
  title = {Compatibility in multiparameter quantum metrology},
  author = {Ragy, Sammy and Jarzyna, Marcin and Demkowicz-Dobrza\ifmmode \acute{n}\else \'{n}\fi{}ski, Rafa\l{}},
  journal = {Phys. Rev. A},
  volume = {94},
  issue = {5},
  pages = {052108},
  numpages = {11},
  year = {2016},
  month = {Nov},
  publisher = {American Physical Society},
  doi = {10.1103/PhysRevA.94.052108},
  url = {https://link.aps.org/doi/10.1103/PhysRevA.94.052108}
}

@article{Liu_2020,
url = {https://dx.doi.org/10.1088/1751-8121/ab5d4d},
year = {2019},
month = {dec},
publisher = {IOP Publishing},
volume = {53},
number = {2},
pages = {023001},
author = {Liu, J. and Yuan, H. and Lu, X. and Wang, X.},
title = {Quantum Fisher information matrix and multiparameter estimation},
journal = {J. Phys. A: Math. Theor.}
}

\section*{appendix}
\label{section:quantum_metrology}
This section offers a detailed discussion of quantum metrology.

Quantum metrology~\cite{Liu_2020} typically involves three stages. In the first stage, a suitable probe state is prepared. In the second stage, the probe undergoes an evolution that encodes the parameters of interest. In the final stage, measurements are performed on the encoded state, and the unknown parameters are estimated from the resulting data.

In more detail, let us start with the single quantum parameter estimation theory. In the first step, an input state $\rho_{\text{in}}\in \mathcal{H}_d$ is prepared, a state of our choice. In the second step, let us consider a scenario where a parameter $x_1 \in \Xi$, to be estimated, is encoded into the probe $\rho_{\text{in}}$ by a quantum process, $\Lambda_{x_1}(\bullet)$.
After the encoding, we have the encoded state 
\begin{equation*}
    \rho(x_1) \coloneqq \Lambda_{x_1}(\rho_{\text{in}}).
\end{equation*}
In the last step, let us suppose that a quantum measurement $\{\Pi_y\}$ is performed on the encoded state where $y$ denotes the measurement outcome. 
In order to infer about the parameter of interest, a function $\hat{\xi}:\mathbb{R}\to \Xi$ called an estimator is introduced, which is a map from the set of real numbers to the parameter space. 
An estimator is said to be locally unbiased if it satisfies the following conditions:
\begin{align*}
    & \int dy \Tr{(\Pi_y \rho(x_1))} \hat{\xi}(y)=\tilde{x}_1,\\
    &\int dy\, \hat{\xi}(y) \frac{d \Tr{(\Pi_y \rho_{x_1})}}{dy}\bigg|_{x_1=\tilde{x}_1}=1.
\end{align*}
where $\tilde{x}_1$ denotes the actual value of the parameter of interest.

Here, we quantify the estimation error for the parameter of interest by the mean square error of the unbiased estimator,
    $V(x_1) \coloneqq \int dy \left(\hat{\xi}(y)-\tilde{x}_1 \right)^2 \Tr{\left(\Pi_y \rho(x_1)\right)}$.
When the measurement is fixed, the mean squared error (MSE) of any unbiased estimator is lower bounded by a quantity called the Cram\'er-Rao bound, i.e., 
\begin{equation*}
    V(x_1) \geq \frac{1}{\mathbb{F}_C(\Pi_y,\rho(x_1))}.
\end{equation*}
which is obtained by optimizing over all unbiased estimators for a fixed measurement. $\mathbb{F}_C(\Pi_y,\rho(x_1))$ denotes the Classical Fisher Information (CFI) corresponding to the measurement $\{\Pi_y\}$ and it is defined as
\begin{equation}
    \mathbb{F}_C(\Pi_y,\rho(x_1)) \coloneqq \int dy \frac{\Tr{\left[\Pi_y \dot{\rho}({x_1})\right]^2}}{\Tr{\left[\Pi_y \rho({x_1})\right]}},
\end{equation}
where $\dot{\rho}(x_1)$ denotes the derivative of the encoded state with respect to the parameter of interest. 

Now optimizing the CFI over all quantum measurements, we obtain the QCRB, i.e.,
\begin{equation}
    \Delta^2\hat{\xi}\geq \frac{1}{\mathbb{F}_C(\Pi_y,\rho(x_1))} \geq \frac{1}{\mathbb{F}_Q(\rho(x_1))} 
\end{equation}
The QFI is defined in terms of SLD operator and the encoded state as follows:
\begin{equation}
    \mathbb{F}_Q(\rho(x_1)) \coloneqq \Tr{(\rho({x_1}) L^2_{x_1})}.
\end{equation}
where $L_{x_1}$ denotes the SLD operator
corresponding to the parameter of interest. It is a hermitian operator whose matrix element in the eigenbasis of the encoded state is given by
\begin{equation*}
    [L_{x_1}]_{ij}=2\frac{\langle e_{\mu}|\frac{\partial \rho({x_1})}{\partial \theta}|e_{\nu}\rangle}{\lambda_{\mu}+\lambda_{\nu}},
\end{equation*}
where $|e_{\mu}\rangle$ is the eigenvector of $\rho({x_1})$ corresponding to the eigenvalue $\lambda_{\mu}$. In the above representation, at least one of the indices must be restricted to the support of the encoded state $\rho({x_1})$ or else the denominator goes to $0$. When both indices are out of the support, the element can be considered to be $0$ by definition.  The QCRB is saturated, i.e, $\mathbb{F}_C(\Pi_y,\rho(x_1))=\mathbb{F}_Q(\rho(x_1))$ when the projective measurement in the eigenbasis of the SLD operator is performed.

Before going to the discussion of multiparameter estimation, we want to mention a useful property of QFI. If the QFI corresponding to parameter $x_1$ is known to be $\mathbb{F}_Q(\rho(x_1))$ then the QFI corresponding to another parameter $z_1$, which is a function of $x_1$ known as $z_1=f(x_1)$, then the QFI corresponding to the new parameter of interest $z_1$ is 
\begin{equation*}
\label{property:QFI-1} \mathbb{F}_Q(\rho(z_1))=\frac{\mathbb{F}_Q(\rho(x_1))}{(\partial z_1/\partial x_1)^2}. 
\end{equation*}

The multiparameter estimation theory is the generalization of the single-parameter estimation theory. Let us suppose that the parameters of interest are $\bm{x} \coloneqq \{x_1,x_2, \ldots, x_m\}$. These parameters are encoded into a quantum state $\rho_{\text{in}}$ through a quantum process $\Lambda_{\bm{x}}(\bullet)$, yielding the encoded state $\rho({\bm{x}})\coloneqq \Lambda_{\bm{x}}\left[\rho_{\text{in}}\right]$.
The values of all encoded parameters are inferred from the statistics of the measurement outcomes of measurement $\{\Pi_y\}$. To perform this inference, $m$ functions, ${\bm{\hat{\xi}}(y)}\coloneqq\left(\hat{\xi}_1,\hat{\xi}_2, \ldots, \hat{\xi}_m\right)$, —called estimators—are introduced, where $\hat{\xi}_i$ denotes the locally unbiased estimator that corresponds to $i$-th parameter of interest. 
%The precision that is achieved 
In this simultaneous estimation of $m$ parameters scenario, the estimation error is quantified by mean square error matrix (MSEM) of the $m$ estimators as follows:
\begin{align*}   
V (\bm{x}) \coloneqq\sum_{y} \Tr{[\Pi_y \rho(\bm{x})]} \left(\bm{\hat{\xi}} (y)-\bm{\tilde{x}}\right)\left(\hat{\bm{\xi}}(y)-\bm{\tilde{x}}\right)^T,
\end{align*}
where $\bm{\tilde{x}} \coloneqq \{\tilde{x}_1, \tilde{x}_2, \ldots, \tilde{x}_m\}$ denotes the vector of the actual values of parameters of interest.

The MESM of $m$ locally unbiased estimators corresponding to the $m$ parameters satisfies the following lower bound:
\begin{equation*}
   V (\bm{x})   \geq \mathbb{\mathbf{F}}^{-1}_\mathbf{C}(\Pi_y, \rho(\bm{x}))  \geq \mathbb{\mathbf{F}}^{-1}_\mathbf{Q}(\rho(\bm{x})).
\end{equation*}
The classical Fisher information matrix, $\mathbb{\mathbf{F}}_\mathbf{C}(\Pi_y, \rho(\bm{x}))$, is defined as
\begin{align*}
    &\big[\mathbb{\mathbf{F}}_\mathbf{C} (\Pi_y, \rho(\bm{x}))\big]_{ij} \\&\coloneqq \int dy \frac{\Tr{\left[\frac{1}{2}\Pi_y \{L_{x_i},\rho({\bm{x}})\}\right]}\Tr\left[{\frac{1}{2}\Pi_y\{L_{x_j},\rho({\bm{x}})\}}\right]}{\Tr{\left[\Pi_y \rho({\bm{x}})\right]}}.
\end{align*} 
The QFIM
is the matrix analog of the QFI for a single parameter. The elements of the QFIM are defined as
\begin{equation*}  \left[\mathbb{\mathbf{F}}_\mathbf{Q}(\rho(\bm{x}))\right]_{ij}  \coloneqq\,\frac{1}{2}\Tr{\left[\rho({\bm{x}})\{L_{x_i}L_{x_j}+L_{x_j} L_{x_i}\}\right]}, 
\end{equation*}
where $i,j\in \{1,2,\ldots ,m\}$ and $L_{x_i}$ denotes the SLD operator corresponding to the $i^{\text{th}}$ parameter of interest.
The diagonal elements of the positive-semidefinite QFIM matrix are the QFIs of each of the encoded parameters.

In order to obtain a scalar figure-of-merit in multiparameter estimation theory, which is much simpler to deal with than a matrix inequality, a cost or weight matrix $\mathcal{W}$ is introduced, which by definition is a real, positive-definite matrix. Using this weight matrix, a scalar multiparameter QCRB can be formulated as
\begin{equation*}
\label{cost}
\Tr{\left[\mathcal{W}\, V(\bm{x})  \right]} \geq \Tr{\left( \mathcal{W} \mathbb{\mathbf{F}}^{-1}_\mathbf{Q}(\rho(\bm{x}))\right)}.
\end{equation*}
\end{document}